\documentclass[submission,copyright,creativecommons]{eptcs}
\providecommand{\event}{AFL 2026} 
\usepackage{iftex}
\ifpdf
  \usepackage{underscore}         
  \usepackage[T1]{fontenc}        
\else
  \usepackage{breakurl}           
\fi
\usepackage{amsmath,amssymb,multirow}
\usepackage{stmaryrd}
\usepackage{tikz}
\usepackage{url}
\usepackage{tikz}
\usetikzlibrary{automata,positioning,chains,fit,calc,arrows.meta,decorations.pathreplacing,matrix,arrows,shapes}
\usepackage{graphicx}
\usetikzlibrary{arrows.meta,positioning,automata}
\usepackage{graphicx} 
\newcommand\lB{\texttt{[}}
\newcommand\rB{\texttt{]}}

\usepackage{amsthm}
\newtheorem{theorem}{Theorem}
\newtheorem{lemma}[theorem]{Lemma}
\newtheorem{corollary}[theorem]{Corollary}

\theoremstyle{definition}
\newtheorem{definition}{Definition}

\renewcommand{\lB}{\mathopen{\scalebox{1.7}[1.0]{\texttt{[}}}}
\renewcommand{\rB}{\mathclose{\scalebox{1.7}[1.0]{\texttt{]}}}}
\DeclareRobustCommand{\lB}{\mathopen{\scalebox{1.7}[1.0]{\texttt{[}}}}
\DeclareRobustCommand{\rB}{\mathclose{\scalebox{1.7}[1.0]{\texttt{]}}}}
\newcommand{\lb}{\mathopen{\scalebox{.8}[1.0]{\texttt{\;[\;}}}}
\newcommand{\rb}{\mathclose{\scalebox{.8}[1.0]{\texttt{\;]\;}}}}
\newcommand{\lz}{\mathopen{\scalebox{.6}{$\texttt{[}^{0}$}}}
\newcommand{\rz}{\mathclose{\scalebox{.6}{$\texttt{]}^{0}$}}}
\newcommand{\ld}{\mathopen{\scalebox{1.4}[1.0]{\texttt{)}}}}
\newcommand{\rd}{\mathclose{\scalebox{1.4}[1.0]{\texttt{(}}}} 
\newcommand{\slB}{\mathopen{\!\!\scalebox{1.7}[1.0]{\texttt{[}}}}
\newcommand{\srB}{\mathclose{\scalebox{1.7}[1.0]{\texttt{]}}\!\!}}
\newcommand{\slb}{\mathopen{\scalebox{.8}[1.0]{\texttt{[}}}}
\newcommand{\srb}{\mathclose{\scalebox{.8}[1.0]{\texttt{]}}}}
\newcommand\quo[1]{\ \text`#1\text'\ }
\newcommand{\qlB}{\mathopen{\quo{\!\!\scalebox{1.7}[1.0]{\texttt{[}}}}}
\newcommand{\qrB}{\mathclose{\quo{\scalebox{1.7}[1.0]{\texttt{]}}\!\!}}}
\newcommand{\qlb}{\mathopen{\quo{\scalebox{.6}[1.0]{\texttt{[}}}}}
\newcommand{\qrb}{\mathclose{\quo{\scalebox{.6}[1.0]{\texttt{]}}}}}

\newcommand\stand{\mathrm{stand}}
\newcommand\lean{\mathrm{lean}}
\renewcommand{\int}{\mathrm{local}}
\newcommand{\local}{\mathrm{local}}

\newcommand{\Xrightarrow}[1]{\mathrel{\raisebox{-.7ex}{$\xrightarrow{#1}$}}}

\newcommand\semo[1]{\ensuremath{\llbracket #1 \rrbracket}}

\newcommand{\fold}{\ensuremath{\phi}} 
\newcommand{\unfold}{\ensuremath{\phi^{-1}}} 

\newcommand\zero{0}
\newcommand\rev{\operatorname{R}}

\usepackage{tikz}
\usetikzlibrary{automata,positioning,arrows.meta}

\tikzset{
  fsa/.style={
    >=Latex,
    node distance=12mm,
    every state/.style={
      circle,
      draw,
      minimum size=6mm,
      inner sep=0pt,
      font=\small
    },
    every edge/.style={
      draw,
      ->,
      shorten >=0pt,
      shorten <=0pt
    }
  },
  init/.style={initial, initial text={}},
  final/.style={accepting},
  edge/.style={->},
  lab/.style={font=\scriptsize, midway, above}
}

\title{Recognizers for Graph-Encoding Languages}
\def\titlerunning{Recognizers for Graph-Encoding Languages}
\author{Anssi Yli-Jyrä$^{1,2}$
 \institute{$^1$Digital Humanities, University of Helsinki, Finland}
 \institute{$^2$Mathematics Research Centre, Tampere University, Finland}
 \email{anssi.yli-jyra@helsinki.fi}}
\def\authorrunning{A. Yli-Jyr\"a}
\date{July 2026}

\begin{document}

\maketitle
\begin{abstract}
We introduce recurrent incidence automata (RIAs), a new automaton model
motivated by a decomposition of certain two-stack visibly pushdown
computations. The decomposition separates vertex-local finite-state
computations from recurrent one-stack interfaces connecting consecutive
vertices.
The construction is motivated by a two-stack visibly pushdown encoding of arbitrary
ordered graphs whose strings admit a unique factorization into
center-foldable vertex-local factors and whose auxiliary stack is empty at
every factor boundary. Folding each factor into a sequence of pair symbols
yields a local interface transformation.
An RIA consists of a finite-state unit that computes these transformations
and a recurrent layer that composes them across consecutive factors.
Rather than manipulating an internal pushdown store, RIAs externalize
long-range stack memory into recurrent interfaces between local
computations.
We show that nondeterministic RIA languages are closed under union,
intersection, concatenation, Kleene-$*$, and reversal. Deterministic RIAs
are 
closed under Boolean operations, although emptiness
remains undecidable.
\end{abstract}

\section{Introduction}

Visibly pushdown languages (VPLs) \cite{AlurMadhusudan2004,OKHOTIN201965} are among the most elegant language families due to their attractive closure and decidability properties, their ability to encode hierarchical structure, and their natural automaton model, visibly pushdown automata (VPAs). In parallel, a wide variety of encodings have been developed for sequence-to-sequence parsing and sequence labeling \cite{Diego-et-al-2026} in natural language processing.  Some encodings use balanced brackets to represent constituent structure \cite{NIPS2015_277281aa}. Others encode non-crossing graphs \cite{IJFCS,Ginter-2006,yli-jyra-gomez-rodriguez-2017-generic,DBLP:journals/jlm/Yli-Jyra19} or unrestricted ordered graphs \cite{Hierarchy,YliJyraFSMNLP2019,EzquerroEtAlACL2025}.
The latter encodings arise from the formal constraints of dependency structures \cite{G_mez_Rodr_guez_2026} and suggest connections with multi-stack visibly pushdown languages \cite{CAROTENUTO20161}. The encoding proposed by \cite{YliJyraFSMNLP2019} is especially relevant since it involves a visibly controlled 2-pushdown system and encodes arbitrary ordered graphs.

The key observation of this paper is that the graph-encoding computations of \cite{YliJyraFSMNLP2019} admit a decomposition into vertex-local finite-state transformations connected by recurrent one-stack interfaces. This decomposition motivates a new automaton model, recurrent incidence automata (RIAs), in which pushdown memory is externalized into recurrent interfaces between consecutive vertex-local factors.  The model separates local finite-state computation from long-range stack memory.


By isolating long-range memory into explicit interface transformations, RIAs expose a reusable algebraic representation of stack behavior that can be analyzed, composed, and constrained independently of the local computation.
This decomposition reflects the implicit restrictions of the underlying two-stack architecture and provides a natural basis for graph enumeration \cite{yli-jyra-gomez-rodriguez-2017-generic,Goyal}, neural network probing
\cite{ebrahimi-etal-2020-self,jerad2026contextfreerecognitiontransformers}, constrained structure decoding \cite{10.1007/978-3-540-93980-1_3,suresh2026dingo,EzquerroEtAlACL2025}, and related symbolic algorithms.

The design of RIAs is motivated by five characteristic structural properties of the graph encoding:

\begin{itemize}

\item \textbf{Vertex-local factorization.}  Each vertex corresponds to one center-foldable factor representing its incidences. 
In our encoding, this correspondence is realized through balanced VPL factors derived from Dyck primes \cite{Dyckprimes} rather than explicit delimiters \cite{Berstel_Perrin_Reutenauer_2009}. 
This preserves the balanced structure of each factor while allowing edge representations to span factor boundaries.
Vertex-local encoding also appears in the hardest CFL \cite{Greibach1973}, lexicalized TAGs \cite{JOSHI2004277}, dependency grammars and link grammars \cite{IJFCS,Ginter-2006}, sequence-labeling parsers \cite{ezquerro-etal-2024-dependency}, and contextual neural sequence encoders \cite{vaswani2017,devlin2019bert,hewitt2019structural}.

\item \textbf{Double-Balanced Bracketing.}
The encoding \cite{YliJyraFSMNLP2019} is characterized by two independent visibly pushdown structures defined over different visible partitions of the same alphabet.
Unlike multiplanar encodings \cite{Yli-Jyra2003,cao-etal-2017-parsing}, the additional pushdown is used only to access the first stack inside the local domains of vertices. 
The encoding is a homomorphic image of an intersection of two VPL languages \cite{YliJyraFSMNLP2019}, thereby suggesting that the language is a two-stack visibly pushdown language \cite{Bollig2008,CAROTENUTO20161}. 
This general characterization is more adequate without restrictions provided in \cite{LaTorreMadhusudanPa-ARobustClassofConte,CAROTENUTO20161}.

\item \textbf{Edge Partitioning.}  
In hierarchical bracketing, edges are partitioned into standalone and leaning edges, where each leaning edge is shorter than and shares an endpoint with a standalone edge
\cite{DBLP:journals/jlm/Yli-Jyra19,YliJyraFSMNLP2019,EzquerroEtAlACL2025}.
The graph's proper rope cover (PRC) uniquely determines this partition, is computed directly, and minimizes the number of standalone edges
\cite{YliJyraFSMNLP2019,EzquerroEtAlACL2025}.
Although the number of leaning edges may be unbounded, the information flow between the shared endpoint of a standalone edge and its locally leaning edges remains bounded.
This reduces the need to store edges separately in the stack.
Such endpoint locality has yielded measurable practical advantages
\cite{cao-etal-2017-parsing,DBLP:journals/jlm/Yli-Jyra19,EzquerroEtAlACL2025}.

\item \textbf{One-Stack Interface.} A crucial observation is that the auxiliary stack is always empty at the boundaries between consecutive vertex-local factors \cite{YliJyraFSMNLP2019}. Consequently, the main stack alone carries non-local information across these boundaries and naturally forms a recurrent one-stack interface. In the RIA model, a recurrent layer propagates the interface language through successive vertex-local computations \cite{Esparza,Uezato}. This decomposition resembles recurrence in language models \cite{hochreiter1997long}, rational recurrences \cite{peng-etal-2018-rational}, and graph-enumeration recurrences \cite{10.1007/978-3-031-89610-1_2}.

\item \textbf{Local Interface Transformations.}
The original encoding uses an auxiliary stack only within a vertex-local computation, leaving it empty at every factor boundary \cite{YliJyraFSMNLP2019}. We exploit this locality through \emph{center folding}, which compiles each local two-stack computation into a finite-state transformation of the persistent interface. Each factor thereby becomes a finite-state \emph{operator} that visibly rewrites the evolving incidence information carried by the interface. This operator view resonates with operator precedence grammars \cite{Floyd1963,CrespiReghizziMandrioli2012} and combinatory categorial grammars \cite{Steedman+2019+389+420}, where local operators compose to determine global structure. Associative composition of finite-state transformations combines the local computations into the global computation \cite{Berstel1979,Sakarovitch2009}.

\end{itemize}
Together, these observations motivate recurrent incidence automata (RIAs), a new automaton model based on recurrent interface transformations. Rather than manipulating a pushdown store throughout the computation, RIAs externalize long-range stack memory into a persistent one-stack interface and perform only finite-state computations locally. Unlike classical visibly pushdown automata, they introduce visible \emph{replace} operations, allowing local computations to update graph incidences directly, by rewriting interface symbols before recurrently composing the resulting interface transformations.

Our main theoretical results show that a two-stack visibly pushdown computation for graph-encoding languages can be decomposed into finite vertex-local interface transformations connected by a recurrent one-stack interface.
RIAs retain several attractive algebraic properties of visibly pushdown automata, and deterministic RIAs are closed under Boolean operations. 
We also define a subclass of RIAs that recognize languages of canonical graph-encodings.  The emptiness problem for RIAs is undecidable, as it is for two-stack visibly pushdown automata \cite{Carotenuto}.

The paper is organized as follows.
Section~2 describes the structure of graph-encoding strings.
Section~3 presents the RIA model and Section~4 its normalizations.
Section~5 studies the RIA-recognizable languages, and Section~6 studies decision problems. The paper is concluded by Section~7.

\paragraph{AI use.}
ChatGPT assisted with language, presentation, and discussion of proofs; all technical content and proofs were independently checked and verified by the author.

\section{Graph-Encoding Strings}
\label{sec:input-representation}

\paragraph{Double-Balanced Language.}

The input of a recurrent incidence automaton carries two independent
structures.  Locally, it is uniquely factorized into center-foldable
factors.  Globally, selected bracket symbols are balanced across factor
boundaries.  These structures are described by two visibly pushdown
languages defined over different visible partitions of the same alphabet.

A \emph{partitioned alphabet} is a tuple
$\Pi=
(\Sigma_{\stand},\Omega_{\stand},
 \Sigma_{\lean},\Omega_{\lean},
 \Sigma_{\local},\Omega_{\local})
$
of pairwise disjoint finite alphabets whose symbols are called
standalone brackets, leaning brackets, and local symbols,
respectively.
Define
$\widehat\Sigma=\Sigma_{\stand}\cup\Sigma_{\lean},\;
\widehat\Omega=\Omega_{\stand}\cup\Omega_{\lean},\;
\Sigma=\widehat\Sigma\cup\Sigma_{\local}$, and 
$\Omega=\widehat\Omega\cup\Omega_{\local}$.
Fix distinguished local symbols
$\quo\rd\in\Sigma_{\local}, \quo\ld\in\Omega_{\local}$.
Although written as brackets, they are local symbols and do not
participate in global bracket matching.

The language of \textit{center-foldable factors} is
$\nabla = \{xy\mid x\in\Sigma_{\local}\Sigma^*,\;y\in\Omega^*\Omega_{\local},\;|x|=|y|\}.$
The language $\nabla$ is a visibly pushdown language and prefix-free.
Consequently, every word in $\nabla^*$ admits a unique factorization $w=\tau_1\tau_2\cdots\tau_m,
\tau_i\in\nabla$ into center-foldable factors.

\textit{Global bracket balance} applies to a subset of the partitioned alphabet.
Let $h:(\Sigma\cup\Omega)^*\rightarrow(\widehat\Sigma\cup\widehat\Omega)^*$ erase all local symbols.
Let $\Delta$ be the language of balanced bracket strings over
$\widehat\Sigma\cup\widehat\Omega$, where every symbol of
$\widehat\Omega$ is an opening bracket and every symbol of
$\widehat\Sigma$ is a closing bracket.  Bracket types are ignored:
every opening bracket may match every closing bracket.
Define $\Delta'=h^{-1}(\Delta)$.
Both $\Delta$ and $\Delta'$ are visibly pushdown languages.

The input language combines two independent visibly pushdown structures over the same alphabet. Locally, every word admits a unique factorization into center-foldable factors. Globally, after erasing the local symbols, the remaining standalone and leaning brackets form a balanced bracket string. Their intersection is
\[
L_{\mathrm{max}}=\nabla^*\cap\Delta'.
\]
A synchronous two-stack visibly pushdown automaton recognizes
$L_{\mathrm{max}}$, using one stack for $\nabla^*$ and the other for
$\Delta'$.

\paragraph{Ordered Graph Encoding.}

The intended inputs of recurrent incidence automata are ordered graphs
$\mathcal G=(V,<,E)$, where $(V,<)$ is a finite linear order and
$E\subseteq V\times V$ satisfies $u<v$ for every edge $(u,v)\in E$.
The encoding used throughout the paper is illustrated by the following example.

Each graph is represented by a double-balanced word in $L_{\mathrm{max}}$.
Every vertex contributes one center-foldable factor, and these factors are concatenated in vertex order.
A local factorization of a word $w$ is the unique decomposition
$w=\tau_1\cdots\tau_n$
obtained by splitting $w$ into center-foldable factors $\tau_i\in\nabla$.
The word in Example~\ref{eq:factors} splits into eight
center-foldable factors:

\begin{gather}
\underbrace{\rd\zero\,\slB\ld}\quad
\underbrace{\rd\zero\,\rz\slb\slB\ld}\quad
\underbrace{\rd\zero\,\srb\lz\slB\ld}\quad
\underbrace{\rd\rz\rz\srB\underline\zero\lz\lz\ld}\quad
\underbrace{\rd\zero\srB\underline\zero\slB\ld}\quad
\underbrace{\rd\srB\underline\zero\ld}\quad
\underbrace{\rd\srB\underline\zero\ld}\quad
\underbrace{\rd\ld}\label{eq:factors}
\end{gather}

Projecting away the local symbols yields the globally balanced bracket
string in Example~\ref{eq:bal}:

\begin{gather}\renewcommand\phantom[1]{{\color{black!35}#1}}
\phantom{\rd\zero\,}\underbrace{\slB\phantom{\ld\quad\rd\zero\,}\rz}%
\overbrace{\slb\;\underbrace{\slB\phantom{\ld\quad\rd\zero\,}\srb}%
\underbrace{\lz\slB\phantom{\ld\quad\rd}\rz\rz}%
\srB}\phantom{\underline{0}}\overbrace{\lz\;\underbrace{\lz\phantom{\ld\quad\!\!\rd\zero}\srB}\phantom{\underline\zero}%
\underbrace{\slB\phantom{\ld\quad\!\!\rd}\srB}\phantom{\underline\zero}%
\phantom{\ld\quad\!\!\rd}\srB}\phantom{\underline\zero}%
\phantom{\ld\quad\rd}%
\phantom{\ld}\label{eq:bal}
\end{gather}

The eight factors correspond to the eight vertices of the ordered graph
in Example~\ref{eq:graph}.

\begin{gather}
{\scalebox{.9}{\begin{tikzpicture}[baseline,
  vertex/.style={circle,draw,inner sep=1.5pt},
  edge/.style={-},
  ann/.style={font=\small, below=3mm},
  annx/.style={font=\scriptsize, below=16mm},
  foldedx/.style={font=\scriptsize, below=21mm},
  old/.style={font=\scriptsize, below=6mm},
  code/.style={font=\scriptsize, below=11mm},
  X folded/.style={font=\scriptsize, below=7mm},
  ]
\node[vertex] (v1) at (0,0) {$1$};
\node[vertex] (v2) at (1.8,0) {$2$};
\node[vertex] (v3) at (3.9,0) {$3$};
\node[vertex] (v4) at (6.0,0) {$4$};
\node[vertex] (v5) at (8.1,0) {$5$};
\node[vertex] (v6) at (9.9,0) {$6$};
\node[vertex] (v7) at (11.4,0) {$7$};
\node[vertex] (v8) at (12.7,0) {$8$};
\draw[edge,ultra thick,bend left=25] (v1) to node[above] {$ $} (v4);
\draw[edge,ultra thick,bend left=20] (v2) to node[above] {$ $} (v7);
\draw[edge,ultra thick,bend left=20] (v3) to node[above] {$ $} (v5);
\draw[edge,bend left=10] (v2) to node[below] {$ $} (v3);
\draw[edge,bend left=15] (v2) to node[below] {$ $} (v4);
\draw[edge,ultra thick,bend left=20] (v5) to node[below] {$ $} (v6);
\node[ann] at (v1) {$\rd\zero\,\slB\ld$};
\node[ann] at (v2) {$\rd\zero\,\rz\slb\slB\ld$};
\node[ann] at (v3) {$\rd\zero\,\srb\lz\slB\ld$};
\node[ann] at (v4) {$\rd\rz\rz\srB\underline\zero\lz\lz\ld$};
\node[ann] at (v5) {$\rd\zero\srB\underline\zero\slB\ld$};
\node[ann] at (v6) {$\rd\srB\underline\zero\ld$};
\node[ann] at (v7) {$\rd\srB\underline\zero\ld$};
\node[ann] at (v8) {$\rd\ld$};
\pgfresetboundingbox
\path[use as bounding box]
  (-0.35,-0.65) rectangle (13.05,-.65);
\end{tikzpicture}}}
\label{eq:graph}
\end{gather}

The alternating local and global bracket matchings induce the chains
shown in Example~\ref{eq:chains}.

\newcommand{\tok}[2]{\node[inner sep=1pt] (p#1) at (#1*.42,0) {$#2$};}
\newcommand{\majlink}[3]{%
  \draw[-{Latex[length=1.2mm]},blue!70!black,thick]
  (p#1.north) .. controls +(0,#3) and +(0,#3) .. (p#2.north);
}
\newcommand{\minlink}[3]{%
  \draw[-{Latex[length=1.2mm]},red!70!black,thick]
  (p#1.south) .. controls +(0,-#3) and +(0,-#3) .. (p#2.south);
}
\newcommand{\bminlink}[3]{%
  \draw[-{Latex[length=1.2mm]},blue!70!black,thick]
  (p#1.south) .. controls +(0,-#3) and +(0,-#3) .. (p#2.south);
}
\newcommand{\dotlink}[3]{%
  \draw[-{Latex[length=1.2mm]},red!70!black,thick]
  (p#1.south) .. controls +(0,-#3) and +(0,-#3) .. (p#2.south);
}
\newcommand{\loclink}[3]{%
}
\begin{gather}
\scalebox{.75}{\begin{tikzpicture}[baseline=(p1.base),font=\normalsize]
\tok{0}{\rd}\tok{1}0\tok{2}{\lB}\tok{3}{\ld}\tok{4}{\rd}
\tok{5}0\tok{6}{\rz}\tok{7}{\lb}\tok{8}{\lB}\tok{9}{\ld}\tok{10}{\rd}\tok{11}0
\tok{12}{\rb}\tok{13}{\lz}\tok{14}{\lB}\tok{15}{\ld}\tok{16}{\rd}
\tok{17}{\rz}\tok{18}{\rz}\tok{19}{\rB}\tok{20}{\underline{0}}\tok{21}{\lz}\tok{22}{\lz}\tok{23}{\ld}
\tok{24}{\rd}\tok{25}{0}\tok{26}{\rB}\tok{27}{\underline{0}}
\tok{28}{\lB}\tok{29}{\ld}\tok{30}{\rd}\tok{31}{\rB}\tok{32}{\underline{0}}
\tok{33}{\ld}\tok{34}{\rd}\tok{35}{\rB}\tok{36}{\underline{0}}
\tok{37}{\ld}\tok{38}{\rd}\tok{39}{\ld}
\dotlink{0}{3}{0.4}\minlink{1}{2}{0.2}
\majlink{2}{6}{0.4}\loclink{3}{4}{0.3}
\dotlink{4}{9}{0.5}\minlink{5}{8}{0.3}\bminlink{6}{7}{0.1}
\majlink{7}{19}{0.6}\majlink{8}{12}{0.3}\loclink{9}{10}{0.2}
\dotlink{10}{15}{0.5}\minlink{11}{14}{0.3}\bminlink{12}{13}{0.1}
\majlink{13}{18}{0.5}\majlink{14}{17}{0.3}\loclink{15}{16}{0.2}
\dotlink{16}{23}{0.56}\bminlink{17}{22}{0.4}\bminlink{18}{21}{0.3}\minlink{19}{20}{0.1}
\majlink{21}{35}{0.6}\majlink{22}{26}{0.3}\loclink{23}{24}{0.2} 
\dotlink{24}{29}{0.5}\minlink{25}{28}{0.3}\minlink{26}{27}{0.15}
\majlink{28}{31}{0.3}\loclink{29}{30}{0.2}
\dotlink{30}{33}{0.5}\minlink{31}{32}{0.2}
\loclink{33}{34}{0.3}
\dotlink{34}{37}{0.4}\minlink{35}{36}{0.2}
\loclink{37}{38}{0.2}
\dotlink{38}{39}{0.2}
\pgfresetboundingbox
\path[use as bounding box]
  (2,-0.65) rectangle (15.05,-.65);
\end{tikzpicture}}\label{eq:chains}
\end{gather}

The $\zero$- and $\rd\ld$-links in Example~\ref{eq:chains} are shown in red;
the remaining links are blue. Each blue chain connects a left bracket
$\qlB$ to a matching right bracket $\qrB$. Complete bracket chains
represent the \emph{standalone edges}, namely $(1,4)$, $(2,7)$, $(3,5)$, and
$(5,6)$ in the example.

The chains also encode \emph{leaning edges}, which share one endpoint
with a standalone edge. In the example, these are $(2,3)$ and $(2,4)$.
Their other endpoint is indicated by a bracket $\qlb$ or $\qrb$. The
resulting representation is called \emph{hierarchical bracketing}
\cite{EzquerroEtAlACL2025} because standalone edges form the proper rope cover, 
while the remaining edges are represented as leaning edges and their leaning brackets.

The correctness and compactness of the encoding rely on the notion of
\emph{proper rope cover} (PRC) \cite{YliJyraFSMNLP2019,EzquerroEtAlACL2025}.
A \emph{rope cover} is a subset $C\subseteq E$ such that every edge belongs to $C$ or leans on an edge of $C$.
A PRC is then a rope cover in which no two edges of $C$ lean on each other.  

An edge outside the PRC may have two supporting PRC edges, but the encoding 
of \cite{YliJyraFSMNLP2019} avoids redundant representations.
Thus, the standalone edges of the encoding are precisely the edges of the PRC, while every leaning edge is associated with one of its supporting PRC edges.
For example, the leaning edge $(2,4)$ could be represented using either
the standalone edge $(1,4)$ or the standalone edge $(2,7)$.  
Instead, only the shared right endpoint is used, as indicated by the leaning bracket $\qlb$ at vertex~2.

\paragraph{Center Folding.}

Each center-foldable factor $w=x_1\cdots x_ny_n\cdots y_1\in\nabla$
admits the canonical folding \allowbreak$\fold(w)=(x_1,y_1)\cdots(x_n,y_n)$,
which replaces each symmetric pair of symbols by a pair symbol.
Conversely, unfolding is defined recursively by 
$\unfold(\epsilon)=\epsilon$ and $\unfold((x,y)\alpha)=x\,\unfold(\alpha)\,y$.
For example, $\rd0\srB\srb\slb\underline0\slB\ld$ 
folds into the pair-symbol string $(\rd,\ld)\,(0,\slB)\,(\srB,\underline0)\,(\srb,\slb)$.

Center folding yields strings over the full pair alphabet
$\Sigma\times\Omega$.  The particular hierarchical encoding
considered here uses only pairs from
$(\Sigma_{\local}\times\Omega_{\stand})\cup
(\Sigma_{\stand}\times\Omega_{\local})\cup
(\Sigma_{\lean}\times\Omega_{\lean})\cup
(\Sigma_{\local}\times\Omega_{\local})$.

\paragraph{One-Stack Interfaces.}

At each boundary between consecutive factors, the active standalone edges form the one-stack interface.
The folded factor describes the vertex-local modifications to the interface.
A vertex may open or close a standalone edge, access an active edge 
through a leaning bracket, or leave the interface unchanged.

Let $\Gamma$ be the stack alphabet. A finite-state transducer processes each folded factor left-to-right by composing its pair symbols into an interface transformation:
\begin{gather}
\alpha\rho \longmapsto \beta\rho,\qquad \alpha,\beta,\rho\in\Gamma^*,   
\label{eq:transform}
\end{gather}
where the accessed prefix $\alpha$ is replaced by $\beta$, while the
residual suffix $\rho$ remains unchanged.
Example~\ref{eq:interface} illustrates these vertex-local interface transformations.
The blue arrows show how the one-stack interface propagates recurrently through the graph.
Vertically aligned pair symbols specify the transformation applied to the interface at each vertex.
\renewcommand{\tok}[4]{\node[inner sep=1pt] (p#3) at (#1*.42,#2*.5) {$#4$};}
\renewcommand{\majlink}[3]{%
  \draw[-{Latex[length=1.2mm]},blue!70!black,thick]
  (p#1.east) .. controls +(0,0) and +(0,0) .. (p#2.west);
}
\renewcommand{\minlink}[3]{%
  \draw[-{Latex[length=1.2mm]},red!70!black,thick]
  (p#1.east) .. controls +(0,0) and +(0,0) .. (p#2.west);
}
\begin{gather}
\raisebox{-3ex}{\scalebox{.75}{\begin{tikzpicture}[baseline=(p1.base),font=\normalsize]
\tok{1.5}{1}{0}{\rd}\tok{1.5}{0}{1}{0}\tok{3}{0}{2}{\slB}\tok{3}{1}{3}{\ld}
\tok{6.5}{2}{4}{\rd}\tok{6.5}{1}{5}0\tok{6.5}{0}{6}{\rz}\tok{8}{0}{7}{\lb}\tok{8}{1}{8}{\slB}\tok{8}{2}{9}{\ld}
\tok{12.5}{3}{10}{\rd}\tok{12.5}{2}{11}0\tok{12.5}{1}{12}{\rb}\tok{14}{1}{13}{\lz}\tok{14}{2}{14}{\slB}\tok{14}{3}{15}{\ld}
\tok{19.5}{3}{16}{\rd}\tok{19.5}{2}{17}{\rz}\tok{19.5}{1}{18}{\rz}\tok{19.5}{0}{19}{\srB}\tok{21}{0}{20}{\underline{0}}\tok{21}{1}{21}{\lz}\tok{21}{2}{22}{\lz}\tok{21}{3}{23}{\ld}
\tok{26.5}{4}{24}{\rd}\tok{26.5}{3}{25}{0}\tok{26.5}{2}{26}{\srB}\tok{28}{2}{27}{\underline{0}}\tok{28}{3}{28}{\slB}\tok{28}{4}{29}{\ld}
\tok{31.5}{4}{30}{\rd}\tok{31.5}{3}{31}{\srB}\tok{33}{3}{32}{\underline{0}}\tok{33}{4}{33}{\ld}
\tok{35.5}{2}{34}{\rd}\tok{35.5}{1}{35}{\srB}\tok{37}{1}{36}{\underline{0}}\tok{37}{2}{37}{\ld}
\tok{39.5}{2}{38}{\rd}\tok{41}{2}{39}{\ld}
\minlink{0}{3}{0.4}
\minlink{1}{2}{0.2}
\majlink{2}{6}{0.4}
\minlink{4}{9}{0.5}\minlink{5}{8}{0.3}\minlink{6}{7}{0.1}
\majlink{7}{19}{0.6}\majlink{8}{12}{0.3}
\minlink{10}{15}{0.5}\minlink{11}{14}{0.3}\minlink{12}{13}{0.1}
\majlink{13}{18}{0.5}\majlink{14}{17}{0.3}
\minlink{16}{23}{0.56}\minlink{17}{22}{0.4}\minlink{18}{21}{0.3}\minlink{19}{20}{0.1}
\majlink{21}{35}{0.6}\majlink{22}{26}{0.3}
\minlink{24}{29}{0.5}\minlink{25}{28}{0.3}\minlink{26}{27}{0.15}
\majlink{28}{31}{0.3}
\minlink{30}{33}{0.5}\minlink{31}{32}{0.2}
\minlink{34}{37}{0.4}\minlink{35}{36}{0.2}
\minlink{38}{39}{0.2}
\pgfresetboundingbox
\path[use as bounding box]
  (2,-0.65) rectangle (15.05,1.95);
\end{tikzpicture}}}\label{eq:interface}
\end{gather}


\section{Recurrent Incidence Automata}


Recurrent incidence automata extend the visibly pushdown paradigm by introducing visible replace operations. Call and return operations maintain the persistent one-stack interface, while replace operations rewrite interface symbols directly. The replace operations avoid explicit unwrapping and rewrapping of the intervening stack prefix.
The resulting interface transformations are computed locally and then composed globally.
Figure~1 summarizes the architecture. The left component computes a local interface transformation, while the right component recurrently composes these transformations along the sequence of vertex-local factors.

\paragraph{The RIA Unit.}

An RIA unit maps one interface to another while reading one folded factor.

An \emph{RIA unit} is a tuple
$A=(Q,I,F,\Pi,\Gamma,\delta)$, where
\begin{itemize}\setlength\itemsep{0em}
\item $Q$ is a finite set of \textit{local states}; 
\item $I\subseteq Q$ is the set of \textit{initial local states}, and $F\subseteq Q\backslash I$ is the set of \textit{final local states};
\item $\Pi$ is a \textit{partitioned alphabet} with symbol classes
  $\Sigma_{\stand},\Omega_{\stand},\Sigma_{\lean},\Omega_{\lean},\Sigma_{\local},\Omega_{\local}$;
\item $\Gamma$ is a finite \textit{stack alphabet}; and
\item $\delta$ is a \textit{transition relation}.
\end{itemize}
This determines the \textit{left alphabet} $\Sigma=\Sigma_{\stand}\uplus\Sigma_{\lean}\uplus\Sigma_{\local}$, 
the \textit{right alphabet} $\Omega=\Omega_{\stand}\uplus\Omega_{\lean}\uplus\Omega_{\local}$, and
the \textit{pair alphabet} $\Lambda=\Sigma \times \Omega$.  Let $\tilde I = Q\backslash I$ denote the non-initial states. Abbreviate 
\emph{stack fragments} as $\Gamma_\epsilon=\Gamma\cup\{\epsilon\}.$
The input symbols belong to the pair alphabet $\Lambda$.  Only the following symbol-pair classes 
occur in the transition system:
\begin{gather}
    \overbrace{(\Sigma_{\local}\times\Omega_{\stand})}^{\textrm{calls}}\;\cup\;
    \overbrace{(\Sigma_{\stand}\times\Omega_{\local})}^{\textrm{returns}}\;\cup\;
    \overbrace{(\Sigma_{\lean}\times\Omega_{\lean})}^{\textrm{replaces}}\;\cup\;
    \overbrace{(\Sigma_{\local}\times\Omega_{\local})}^{\textrm{internals}}
\end{gather}
An RIA unit transforms an input interface (a stack configuration) into an output interface according to the
instructions encoded by a folded input string, as illustrated in Figure~\ref{fig:RIA}.
Operationally, each transition reads one folded symbol pair and transforms the currently accessed interface position.
The symbol-pair classes correspond to the following transition types:
\begin{itemize}\setlength\itemsep{0em}
\item $\Sigma_{\local}\times\Omega_{\stand}$: \textit{call} transitions insert a new stack symbol into the interface;
\item $\Sigma_{\stand}\times\Omega_{\local}$: \textit{return} transitions remove a stack symbol from the interface;
\item $\Sigma_{\lean}\times\Omega_{\lean}$: \textit{replace} transitions rewrite the currently accessed interface symbol;
\item $\Sigma_{\local}\times\Omega_{\local}$: \textit{internal} transitions are used for internal annotation, but
\textit{initial} transitions also propagate the finite-state control of the layer between the interfaces.
\end{itemize}
Since the fourth symbol-pair class has two uses, the transition relation is partitioned into the following five classes:
\begin{gather}\begin{alignedat}{20}
\delta_1&\subseteq \tilde{I} \times&(\Sigma_\local  &\times\Omega_\stand)&\times&\{\epsilon\} &\times\;\;\Gamma  &\times \tilde{I}, & \qquad &
\delta_2&\subseteq \tilde{I} \times&(\Sigma_\stand  &\times\Omega_\local)&\times&\;\;\Gamma   &\times\{\epsilon\}&\times \tilde{I}, \\
\delta_3&\subseteq \tilde{I} \times&(\Sigma_\lean   &\times\Omega_\lean )&\times&\;\;\Gamma   &\times\;\;\Gamma  &\times \tilde{I}, \\
\delta_4&\subseteq \tilde{I} \times&(\Sigma_\local  &\times\Omega_\local)&\times&\{\epsilon\} &\times\{\epsilon\}&\times \tilde{I}, & &
\delta_5&\subseteq I            \times&(\Sigma_\local  &\times\Omega_\local)&\times&\;\;\Gamma   &\times\;\;\Gamma  &\times \tilde{I}.
\end{alignedat}\end{gather}
The control state of the recurrent layer is stored as a \textit{carrier symbol} at the top of the interface.
Initial transitions have a distinct pattern as they carry the layer's state from the input interface to the output interface.
Therefore, the role of the symbol pairs in initial transitions is different from that in non-initial transitions. 

A transition $(q,x,\gamma,\gamma',q')\in\delta$ reads the pair symbol $x\in\Lambda$, changes the state from $q$ to $q'$, consumes $\gamma\in\Gamma_\epsilon$ from the currently accessed input interface position, and emits $\gamma'\in\Gamma_\epsilon$ to the output interface.  Here $\epsilon$ denotes the absence of a consumed or emitted interface symbol.  The transition is also denoted 
\begin{gather}
q\Xrightarrow{x/\gamma/\gamma'}q'.
\end{gather}

A local computation of $A$ on a nonempty pair-symbol string $u=x_1\cdots x_n\in\Lambda^+$ defines a transformation of a nonempty input interface $\sigma\in \Gamma^+$ to an output interface $\sigma'\in \Gamma^+$.  It consists of states $q_0,\ldots,q_n$, transitions 
$(q_0\Xrightarrow{x_1/\gamma_1/\gamma_1'}q_1),$
$(q_1\Xrightarrow{x_2/\gamma_2/\gamma_2'}q_2), \dots,$
$(q_{n-1}\Xrightarrow{x_n/\gamma_n/\gamma_n'}q_n),$
and a residual suffix $\rho\in\Gamma^*$ such that
\[
q_0\in I, \qquad \sigma=\gamma_1\cdots\gamma_n\rho, \qquad \sigma'=\gamma_1'\cdots\gamma_n'\rho.
\]
The occurrences of $\epsilon$ are omitted from the concatenations $\gamma_1\cdots\gamma_n$ and $\gamma_1'\cdots\gamma_n'$,
and the residual suffix $\rho$ remains unaccessed and unchanged.  
The local computation is \emph{accepting} if $q_n\in F$.

\begin{figure}\centering\scalebox{0.67}{
\begin{tikzpicture}[
    >=Latex,
    font=\normalsize,
    every node/.style={align=center},
    box/.style={draw, rounded corners=2pt, minimum height=7mm},
    tape/.style={draw, minimum width=10mm, inner sep=0pt, minimum height=7mm},
    dtape/.style={draw, minimum width=20mm, inner sep=0pt, minimum height=7mm},
    tape5/.style={draw, minimum width=13mm, inner sep=0pt, minimum height=7mm},
    tape3/.style={draw, minimum width=8mm, inner sep=0pt, minimum height=7mm},
    tape4/.style={draw, minimum width=7mm, inner sep=0pt, minimum height=7mm},
    tape2/.style={draw, minimum width=12mm, inner sep=0pt, minimum height=7mm},
    tape22/.style={draw, minimum width=26.5mm, inner sep=0pt, minimum height=7mm},
    tape20/.style={draw, minimum width=21mm, inner sep=0pt, minimum height=7mm},
    tape2i/.style={draw=none,fill=none, minimum width=15mm, inner sep=0pt, minimum height=7mm},
    stack/.style={draw, minimum width=12mm, inner sep=0pt, minimum height=10mm},
    stack2/.style={draw, minimum width=4mm, inner sep=0pt, minimum height=3mm},
    unit/.style={draw, rounded corners=3pt, fill=gray!5},
    factor1/.style={fill=green!15},
    factor2/.style={fill=yellow!20},
    factor3/.style={fill=violet!15},
    factor4/.style={fill=red!10},
    item/.style={circle,draw,thick,align=center},
    itemc/.style={item,on chain,join},
    unitde/.style={unit, minimum width=12mm, on chain, join},
    unitc/.style={draw, rounded corners=0pt, minimum width=12mm, minimum height=7mm, fill=red!5, on chain, join}
]
\begin{scope}
\node[font=\bfseries, text=green!40!black] at (-4.5,4.7) {Input};
\node[font=\bfseries, text=green!40!black] at (-4.45,4.3) {stack};
 \foreach \y/\lab in {
  1/{$\rho$}, 2/{$\gamma_n$}, 3/{\raisebox{1ex}{\scalebox{.8}{$\vdots$}}}, 4/{$\gamma_2$}, 5/{$\gamma_1$}
 }{\node[tape2, very thick, fill=yellow!10] (s\y) at (-4.5,\y*.7) {\scalebox{1.3}{\lab}};}
\node[font=\bfseries, text=green!40!black] at (4,4.7) {Output};
\node[font=\bfseries, text=green!40!black] at (4.1,4.3) {stack};
 \foreach \y/\lab in {
  1/{$\rho$}, 2/{$\gamma'_n$}, 3/{\raisebox{1ex}{\scalebox{.8}{$\vdots$}}}, 4/{$\gamma'_2$}, 5/{$\gamma'_1$}
 }{ \node[tape2, very thick, fill=violet!10] (t\y) at (4.1,\y*.7) {\scalebox{1.3}{\lab}}; }
\node[font=\bfseries, text=green!40!black] at (2.3,1+4.5) {Folded};
\node[font=\bfseries, text=green!40!black] at (2.4,1+4.1) {factor};
\foreach \x/\lab in {
  0/{$x_1$}, 1/{$x_2$}, 2/{$\cdots$}, 3/{$x_n$}
}{ \node[tape2, very thick, fill=green!10] (ft\x) at (\x*1.2-3,1+4.3) {\scalebox{1.3}{\lab}}; }
\node[unit, minimum width=5.8cm, minimum height=4.35cm] (U) at (0,1+1.5) {};
\node[font=\bfseries, text=green!40!black] at (0,4.3) {\scalebox{1.2}{RIA Unit $U$}};
\foreach \y/\a/\b/\c/\d/\e in {
 0/{$x_1$}/{$\gamma_1$}/{$\gamma_1'$}/{$q_0$}/{$q_1$}, 
 1/{$x_2$}/{$\gamma_2$}/{$\gamma_2'$}/{$q_1$}/{$q_2$}, 
 2/{\raisebox{0ex}{$\cdots$}}/{\raisebox{0ex}{$\cdots$}}/{\raisebox{0ex}{$\cdots$}}/{\raisebox{0ex}{$\cdots$}}/{\raisebox{0ex}{$\cdots$}},
 3/{$x_n$}/{$\gamma_n$}/{$\gamma_n'$}/{$q_{n-1}$}/{$q_{n}$}
}{
\node[tape2i]                at (\y*1.2-1.9,1.7+2)    {\scalebox{1.3}{\d}};
\node[tape2, fill=green!10]  at (\y*1.2-1.9,1.7+1.3)  {\scalebox{1.3}{\a}};  
\node[tape2, fill=yellow!10] at (\y*1.2-1.9,1.7+0.6)  {\scalebox{1.3}{\b}};
\node[tape2, fill=violet!10] at (\y*1.2-1.9,1.7-0.1)  {\scalebox{1.3}{\c}};
\node[tape2i]                at (\y*1.2-1.9,1.7-0.8)  {\scalebox{1.3}{\e}};
}
\draw[dashed,thick,->] (-3.85,2.3)--(-2.48,2.3);
\draw[dashed,thick,->] (2.3,1.6)--(3.47,1.6);
\draw[dashed,thick,->] (-3.4,1+3.9)--(-3.4,1+2.1)--(-2.48,1+2.1); 
\end{scope}


\begin{scope}[xshift=10cm,yshift=3.5cm,]
\begin{scope}[start chain=going right,nodes=unitc,every
 join/.style={-latex,very thick},local bounding box=chain]
 \path node[unit] (A0) {\scalebox{1.3}{$U$}} node[unit] (A1) {\scalebox{1.3}{$U$}} node[unit] (A2) {\scalebox{1.3}{$U$}} node[unit,xshift=2em] (At) {\scalebox{1.3}{$U$}};
 \end{scope}
 \node[left=1em of chain,scale=2] (eq) {$=$};
 \node[unit, left=2em of eq] (AL) {\scalebox{1.3}{$U$}};
 \path (AL.west) ++ (-1em,2em) coordinate (aux);
 \draw[very thick,-latex,rounded corners] (AL.east) -| ++ (1em,2em) -- (aux) 
 |- (AL.west);
 \foreach \X in {0,1,2,t} {
   \draw[very thick,unitc,-latex] (A\X.north) -- ++ (0,2em)
    node[above,tape2,fill=violet!10] (h\X) {\scalebox{1.3}{$\sigma_\X$}};
   \draw[very thick,unitc,latex-] (A\X.south) -- ++ (0,-2em)
    node[below,tape2,fill=green!10] (x\X) {\scalebox{1.3}{$\phi(\tau_\X)$}};
   \node[below=3ex of x\X,tape20,very thick] (tau\X) {\scalebox{1.3}{$\tau_\X$}};
   \draw[very thick,-latex] (tau\X.north) -- (x\X.south);
 }
 \draw[white,line width=0.8ex] (AL.north) -- ++ (0,1.9em);
 \draw[very thick,-latex] (AL.north) -- ++ (0,2.2em)
 node[above,tape2,fill=violet!10] {\scalebox{1.3}{$\sigma_t$}};
 \draw[very thick,latex-] (AL.south) -- ++ (0,-2em)
 node[below,tape2,fill=green!10] {\scalebox{1.3}{$\phi(\tau_t)$}};
 \path (x2) -- (xt) node[midway,scale=1,font=\bfseries] {\dots};
 \path (tau2) -- (taut) node[midway,scale=1,font=\bfseries] {\dots};

\node[font=\bfseries, blue, anchor=west] at (4.8,0.7) {Intermediate};
\node[font=\bfseries, blue, anchor=west] at (5.2,0.35) {stacks};
\node[font=\bfseries, blue, anchor=west] at (-2.3,1.6) {Output};
\node[font=\bfseries, blue, anchor=west] at (-2.2,1.25) {carrier};
\node[font=\bfseries, blue, anchor=west] at (-2.3,-1.25) {Folded};
\node[font=\bfseries, blue, anchor=west] at (-2.25,-1.6) {factors};
\node[font=\bfseries, blue, anchor=west] at (-2.5,-2.5) {Input};
\node[font=\bfseries, blue, anchor=west] at (-2.6,-2.85) {factors};

\end{scope}
\end{tikzpicture}
 
}\caption{The architecture of a recurrent incidence automaton.}
\label{fig:RIA}
\end{figure}

\paragraph{Stack-transformation relation.}

The relation $\semo A \subseteq \nabla \times \Gamma^* \times \Gamma^*$ is the \textit{stack-transformation relation} of unit $A$. 
\[
\semo A = \{ (w,\sigma,\sigma') \in \nabla \times \Gamma^* \times \Gamma^* \mid 
\textrm{$A$ has an accepting computation on $\fold(w) \in \Lambda^*$ from $\sigma$ to $\sigma'$}\;\}.
\]
Let $R,S\subseteq\nabla^*\times\Gamma^*\times\Gamma^*$ be stack-transformation relations.  
Their \textit{composition} is the relation 
\[
S\circ R = \bigl\{ (ww',\sigma,\sigma'') \mid \exists\sigma': (w,\sigma,\sigma')\in R \land (w',\sigma',\sigma'')\in S \bigr\}.
\]
The powers of a relation $R$ are defined recursively by
$R^0=\{(\epsilon,\sigma,\sigma)\mid \sigma\in\Gamma^*\}$
and $R^{n}=R^{n-1}\circ R \quad(n\ge1)$.
Thus, $R^n$ represents $n$ consecutive left-to-right applications of $R$.
The \emph{composition closure} of $R$ is $R^{\circledast}=\bigcup_{n\ge0}R^n$,
representing an arbitrary number of recurrent applications of $R$.

\paragraph{The RIA Layer.}

An RIA layer lifts the local stack transformations computed by a unit to
complete input strings through recurrent composition, as illustrated on the
right-hand side of Figure~\ref{fig:RIA}.

An \emph{RIA layer} is a tuple $\mathcal A=(A,G,H),$ where $A=(Q,I,F,\Pi,\Gamma,\delta)$ is an RIA unit,
and $G,H\subseteq\Gamma$ are the sets of \textit{initial and final carrier symbols}, respectively.
The language recognized by an RIA layer $\mathcal A=(A,G,H)$ is defined using the composition closure of the stack-transformation relation $\semo A$.
It is 
\begin{gather}
L_{\mathcal A} = \{\, w \mid (w,\sigma,\sigma') \in \semo A^{\circledast} \text{ for some } \sigma \in G^1 \text{ and } \sigma' \in H^1 \,\}.
\end{gather}
Languages recognized by RIA layers are RILs and their family is denoted by $\mathsf{RIL}$.

The \emph{maximal layer} $\mathcal U=(A,\{\iota\},\{\iota\})$ recognizes every RIA-admissible string $w\in\nabla^*\cap\Delta'$, i.e., every such string whose
folded factors use only the RIA-admissible symbol-pair classes.
Its unit is $A=(\{0,1\},\allowbreak\{0\},\allowbreak\{1\},\allowbreak\Pi,\allowbreak\{\iota\},\delta)$ where 
\begin{align}
\delta={}&
\{0\xrightarrow{x/\iota/\iota}1\mid x\in\Sigma_\local\times\Omega_\local\}
\cup\{1\xrightarrow{x/\iota/\iota}1\mid x\in\Sigma_\lean\times\Omega_\lean\}
\cup\{1\xrightarrow{x/\epsilon/\iota}1\mid x\in\Sigma_\local\times\Omega_\stand\}
\notag\\
&\cup\{1\xrightarrow{x/\iota/\epsilon}1\mid x\in\Sigma_\stand\times\Omega_\local\}
\cup\{1\xrightarrow{x/\epsilon/\epsilon}1\mid x\in\Sigma_\local\times\Omega_\local\}.
\end{align}

\section{Normalisations}

\begin{lemma}\label{lem:carr}
    Any RIL can be represented by a layer whose initial and final carriers never occur in stacks that contain other elements.
\end{lemma}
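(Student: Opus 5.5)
The plan is to replace the carriers by two fresh symbols that the new unit can produce or consume only when the stack is provably a singleton. Let $\mathcal A=(A,G,H)$ with $A=(Q,I,F,\Pi,\Gamma,\delta)$. The new layer is $\mathcal A'=(A',\{\iota_{\mathrm{in}}\},H')$, with $\iota_{\mathrm{in}},\iota_{\mathrm{out}}\notin\Gamma$ fresh. We put $H'=\{\iota_{\mathrm{out}}\}$, and add $\iota_{\mathrm{in}}$ to $H'$ exactly when $G\cap H\neq\emptyset$. This last case takes care of the empty word, since $(\epsilon,\sigma,\sigma)\in\semo{A}^0$.

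\emph{Bottom marks.} The unit $A'$ cannot see the residual suffix $\rho$, so the stack must carry this information. Let the stack alphabet of $A'$ be $\Gamma\times\{\bot,\top\}\cup\{\iota_{\mathrm{in}},\iota_{\mathrm{out}}\}$, where $\bot$ marks the bottommost stack symbol. The invariant to maintain is that every reachable intermediate stack has exactly its last symbol marked $\bot$, and that $\iota_{\mathrm{in}}$ counts as bottom-marked. A local computation consumes a prefix $\gamma_1\cdots\gamma_n$ and emits $\gamma'_1\cdots\gamma'_n$. Then $\rho$ is empty iff the last consumed symbol is $\bot$-marked. Moreover, the new bottom symbol is the last emitted symbol if $\rho$ is empty, and otherwise the bottom of $\rho$, which is untouched.

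The states of $A'$ are therefore triples: an original state, a flag recording whether a $\bot$-symbol has been consumed, and a flag recording whether the unit has already committed to its last emission. Each original transition is simulated with marked symbols, subject to three rules.
\begin{itemize}
\item A $\bot$-marked symbol may only be consumed as the last consumption.
\item When a symbol is emitted, the unit nondeterministically guesses whether it is the final emission. If it is, the symbol carries mark $\bot$ only when a $\bot$ has already been consumed; in that case, and in every case where no $\bot$ is consumed, no later emission and no later consumption may occur.
\item A local computation is accepting only if the original state is in $F$ and the guesses are consistent. If a $\bot$ was consumed, some emission must have been declared last. If no $\bot$ was consumed, every emission is marked $\top$.
\end{itemize}
The case of a $\bot$ consumed with nothing emitted, i.e.\ an empty output, cannot occur, because the initial transition in $\delta_5$ always emits the carrier.

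\emph{Carriers.} For every initial transition $(p,x,g,\gamma',q)\in\delta_5$ with $g\in G$, add the transition $(p,x,\iota_{\mathrm{in}},\cdot,q)$. It consumes $\iota_{\mathrm{in}}$, which is bottom-marked, so the rest of the factor must not consume anything further. For every $(p,x,\gamma,h,q)\in\delta_5$ with $h\in H$, add a variant that emits $\iota_{\mathrm{out}}$ instead of $h$. This variant is permitted only if the unit has consumed a $\bot$, declares this emission last, and performs no further emission. Every interface containing $\iota_{\mathrm{out}}$ is therefore the singleton $\iota_{\mathrm{out}}$. No transition consumes $\iota_{\mathrm{out}}$, and no transition emits $\iota_{\mathrm{in}}$, so both symbols occur only in singleton stacks.

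\emph{Correctness.} The remaining step is to check $L_{\mathcal A'}=L_{\mathcal A}$. The inclusion $L_{\mathcal A}\subseteq L_{\mathcal A'}$ follows by induction on the number of factors: run $A'$ with the correct marks and the correct guesses, and substitute $\iota_{\mathrm{in}}$ in the first factor and $\iota_{\mathrm{out}}$ in the last. The reverse inclusion follows by erasing the marks and mapping $\iota_{\mathrm{in}}\mapsto g$ and $\iota_{\mathrm{out}}\mapsto h$, using the chosen transitions. The main obstacle I expect is justifying the bottom-mark invariant across composition. One must show that the guessed "last emission" coincides with the true bottom exactly when $\rho=\epsilon$, and that a wrong guess blocks acceptance rather than producing a spurious run. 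Consuming a $\bot$ forces $\rho=\epsilon$ only because a $\bot$-symbol is always genuinely the bottom, and this is precisely what the induction over factors must establish.
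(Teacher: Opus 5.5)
\paragraph{Gap: timing of the bottom mark.} Your design is a more explicit version of the paper's idea of distinguished carrier copies for singleton stacks. It uses fresh carriers $\iota_{\mathrm{in}},\iota_{\mathrm{out}}$ plus bottom marks so that the unit can detect $\rho=\epsilon$. The problem is the rule that places the bottom mark. You let the declared final emission carry $\bot$ only if a $\bot$-marked symbol has \emph{already} been consumed, and you forbid consumption after a declared final emission. But in every factor the initial transition makes the first emission, and returns may pop the rest of the stack afterwards. So the bottom is often consumed \emph{after} the last emission.

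Take the maximal layer $\mathcal U$ and the three factors with folded forms $(\rd,\ld)(0,\slB)$, then $(\rd,\ld)(\srB,\underline0)$, then $(\rd,\ld)$. $\mathcal U$ accepts this word via the stacks $\iota\iota$, $\iota$, $\iota$. After the first factor your unit correctly outputs $(\iota,\top)(\iota,\bot)$. In the second factor, the initial transition consumes $(\iota,\top)$ and emits the new carrier, which is the last emission; only then does the return consume $(\iota,\bot)$. Your rules leave no valid option:
\begin{itemize}
\item marking the carrier $\bot$ is forbidden at emission time;
\item declaring it last with mark $\top$ blocks the return;
\item declaring nothing last violates your acceptance condition.
\end{itemize}
So the run dies. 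Under the most lenient reading it instead outputs a singleton marked $\top$. Then the third factor never consumes a $\bot$ and cannot emit $\iota_{\mathrm{out}}$. Either way $\mathcal A'$ rejects a word of $L_{\mathcal U}$.

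The same defect hits $\iota_{\mathrm{out}}$ directly. It is emitted by the initial transition, so requiring that a $\bot$ has been consumed at that moment forces the last factor to start from a singleton stack. A last factor that closes its remaining edges by returns can then never produce $\iota_{\mathrm{out}}$.

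\paragraph{Repair.} Replace the ``already consumed'' test by guess-and-verify.
\begin{itemize}
\item When you declare an emission last, guess its mark.
\item After that, forbid further emissions but allow further consumptions.
\item At the end of the factor, accept only if the mark is $\bot$ exactly when a $\bot$-marked symbol was consumed.
\item Treat $\iota_{\mathrm{out}}$ the same way: the initial transition emits it, no further emission follows, and the bottom must be consumed by the end of the factor.
\end{itemize}
Because the check happens at the end of the factor, $\bot$ always marks the true bottom, and your induction over factors then works in both directions.

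A side remark: offering $\iota_{\mathrm{out}}$ as an optional alternative to $(h,\bot)$ makes the unit non-functional even when $A$ is functional. The lemma as stated does not require functionality, but Theorem~\ref{thm:boolean} applies this lemma to deterministic layers.
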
     
\begin{proof}[Proof sketch]
  Introduce distinguished copies of the initial and final carrier symbols for singleton stacks and ordinary copies for their occurrences below other stack symbols. The unit propagates the distinction when symbols are pushed, popped, or exposed. Consequently, distinguished carrier symbols occur only as singleton stacks, while the induced interface transformations remain unchanged up to the renaming.
\end{proof}

The unit \(A\) is \emph{complete} if its transition relation $\delta$ defines a transition 
$q \xrightarrow{x/\gamma/\gamma'} q'$ on some $\gamma' \in \Gamma_\epsilon$ for every state $q\in Q$, 
for every symbol pair $x\in \Lambda$ and stack fragments $\gamma \in \Gamma_\epsilon$ that make an admissible 
combination given the partitioning of the transition relations.

\begin{lemma}\label{lem:completion}
   Every RIL has an equivalent complete RIA layer. If the original layer
   is deterministic, the complete layer can be chosen deterministic.
\end{lemma}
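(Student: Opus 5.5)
The plan is the standard sink-state completion, adapted to the five transition classes and to the rule that initial states are separated from non-initial ones. Given a layer $\mathcal A=(A,G,H)$ with unit $A=(Q,I,F,\Pi,\Gamma,\delta)$, I add one fresh non-initial, non-final state $s$ and build $A'=(Q\cup\{s\},I,F,\Pi,\Gamma,\delta')$, keeping $G$ and $H$ unchanged. For each state $q$ and each admissible combination of symbol pair $x$ and consumed fragment $\gamma$, I check whether $\delta$ already has a transition $q\xrightarrow{x/\gamma/\gamma'}q'$. If it does not, I add a transition to $s$, choosing the emitted fragment according to the class. For the $\delta_1$ pattern I emit some fixed $\gamma_0\in\Gamma$ (if $\Gamma$ is empty the language is empty and the claim is trivial). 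For $\delta_2$ I emit $\epsilon$. For $\delta_3$ and $\delta_5$ I emit $\gamma$ itself. For $\delta_4$ I emit $\epsilon$. I also let $s$ loop to itself under every admissible pattern of the non-initial classes $\delta_1$ through $\delta_4$. Since $s\in\tilde I$, these added transitions respect the typing of the five classes: $\delta_5$ transitions start in $I$ and end in $\tilde I$, and all others stay in $\tilde I$.

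Next I show that $\semo{A'}=\semo A$. Any accepting computation of $A$ is also an accepting computation of $A'$, which gives one inclusion. For the other, take an accepting computation of $A'$. Once it enters $s$ it never leaves, and $s\notin F$, so an accepting computation never visits $s$. It therefore uses only transitions of $\delta$ and is already an accepting computation of $A$ with the same residual $\rho$. Hence the composition closures coincide, and since $G$ and $H$ are unchanged, $L_{\mathcal A'}=L_{\mathcal A}$.

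For determinism, the completion adds a transition for a pair $(q,x,\gamma)$ only when $\delta$ had none there, and it adds exactly one. The sink has exactly one transition for each admissible pair $(x,\gamma)$. So if $A$ was deterministic, meaning at most one transition per state, pair symbol and consumed fragment, then $A'$ is deterministic as well.

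The main obstacle is pinning down what ``admissible combination'' and determinism mean when $\gamma$ ranges over $\Gamma_\epsilon$. A call consumes $\epsilon$, while returns, replaces and initial transitions consume a symbol. I fix the reading that a combination is admissible exactly when it matches one of the patterns $\delta_1$ through $\delta_5$, restricted to whichever of $I$ or $\tilde I$ applies to $q$. Under that reading, no added transition can conflict with an existing one. I also need to check that the sink never turns a non-accepting run into an accepting one, for example through an exhausted stack being read as a different residual. This holds because acceptance depends only on the last state, and $s$ is absorbing and non-final.
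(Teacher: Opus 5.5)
Your proof is correct for the lemma as stated. Completeness is a purely syntactic condition on $\delta$, and your absorbing non-final sink $s\in\tilde I$ supplies every missing admissible transition with a type-correct emission while keeping $\semo{A'}=\semo A$.

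One correction: the paper's notion of a deterministic layer is semantic ($G$ a singleton and $\semo A$ functional, with the explicit remark that the unit itself may be nondeterministic). Your transition-counting argument therefore addresses a different notion. You do not need it, because $\semo{A'}=\semo A$ with $G$ unchanged already gives determinism in the paper's sense.

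The paper takes a different route. It records failure in the interface rather than in the control state: it adds a fresh rejecting stack symbol $\dagger$ and a disjoint deterministic completion component whose computations leave $\dagger$ in the output stack. Rejection then happens only at the layer level, because no stack containing $\dagger$ has a final carrier.

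Your route buys simplicity: the stack-transformation relation is literally unchanged, so equivalence and functionality are immediate. But it gives nothing beyond syntactic completeness, since a pair $(w,\sigma)$ outside the domain of $\semo A$ stays outside the domain of $\semo{A'}$. That is exactly the property the paper draws from this lemma in Theorem~\ref{thm:boolean} (``every input has a unique output stack under $B$''). With your sink, a word of $L_{\mathcal A}$ on which $B$ has no accepting local computation for some factor lies in $L_{\mathcal A}\setminus L_{\mathcal B}$. Yet it is not accepted by $(A\times B,\,G_A\times G_B,\,H_A\times(\Gamma_B\setminus H_B))$, since the $B$-component ends in the non-final $s$.

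The paper's route aims at that totality, at the price that $\semo A$ changes, so equivalence and functionality need a real argument. New outputs must arise only where the original had none. Every entry into the completion component must also still leave a $\dagger$ behind, which is delicate when the entry is a return or internal transition that can only emit $\epsilon$.
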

\begin{proof}[Proof sketch]
Introduce a fresh rejecting stack symbol $\dagger$. Extend the unit by
a disjoint deterministic completion component whose transition relation 
is defined on every compatible input--stack pair outside the original 
domain. Once entered, the completion component cannot be left. Every 
computation in the completion component produces a stack containing $\dagger$, and 
no stack containing $\dagger$ has a final carrier. The extension is 
therefore complete, preserves determinism, and recognizes the same language.
\end{proof}

The unit \(A\) is \emph{unambiguous} if every $(u,\sigma,\sigma')\in\semo A$ has a unique accepting computation and \emph{functional} if \(\semo A\) is a partial function $(\Sigma\cup\Omega)^*\times\Gamma^*\rightharpoonup\Gamma^*$.


A layer $\mathcal A=(A,G,H)$ is \emph{deterministic} if $G$ is a singleton and $A$ is functional,
and \emph{unambiguously deterministic} if $A$ is additionally unambiguous.
These notions are semantic: the underlying unit may still be
nondeterministic.  

\section{Closure Properties}

RIAs were motivated by graph encodings, but the automaton model is of independent interest. 
The model separates local finite-state computation from long-range stack memory.
We therefore first establish closure properties for general RILs before returning to graph-encoding languages as a distinguished subclass.
Denote the family of visibly pushdown languages (VPL) by $\mathsf{VPL}$,
and the family of 2-stack visibly pushdown languages (2MVPL) by $\mathsf{2MVPL}$.

\subsection{Unrestricted RILs}

\begin{definition}
Let $A=(Q_A,I_A,F_A,\Pi,\Gamma_A,\delta_A)$ and $B=(Q_B,\allowbreak I_B,\allowbreak F_B,\Pi,\allowbreak\Gamma_B,\allowbreak\delta_B)$ be RIA units over the same partitioned alphabet $\Pi$.  Their \emph{direct product} is the unit $A\times B=(Q_A\times Q_B,\allowbreak  I_A\times I_B,\allowbreak  F_A\times F_B,\allowbreak\Pi,\allowbreak
 \Gamma_A\times\Gamma_B,\allowbreak
 \delta_{A\times B}),$ where
\begin{align}
    & ((q_A,q_B),x,(\gamma_A,\gamma_B), (\gamma_A',\gamma_B'),(q_A',q_B'))  \in\delta_{A\times B}\notag\\
         &\qquad\Longleftrightarrow (q_A,x,\gamma_A,\gamma_A',q_A')\in\delta_A \ \text{and}\ (q_B,x,\gamma_B,\gamma_B',q_B')\in\delta_B .
\end{align}
\end{definition}

\begin{lemma}
Let $A$ and $B$ be RIA units over the same partitioned alphabet $\Pi$.
A computation of $A\times B$ exists exactly when the corresponding
computations of both $A$ and $B$ exist, and its projections are those
component computations.
\end{lemma}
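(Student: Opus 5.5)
The plan is a direct induction on the length of the folded input $u=x_1\cdots x_n$, after one preliminary observation that makes the product well defined on interfaces. The observation is that the transition classes $\delta_1,\dots,\delta_5$ are determined by the pair symbol $x$ together with whether the source state is initial. Within a single class, the consumed fragment is either always $\epsilon$ or always a symbol of $\Gamma$, and the same holds for the emitted fragment. Since $(q_A,q_B)\in I_A\times I_B$ exactly when both components are initial, a product transition $((q_A,q_B),x,(\gamma_A,\gamma_B),(\gamma'_A,\gamma'_B),(q'_A,q'_B))$ is always built from two component transitions of the \emph{same} class. Consequently $\gamma_A=\epsilon$ iff $\gamma_B=\epsilon$, and likewise for $\gamma'_A,\gamma'_B$. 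I will therefore identify $(\epsilon,\epsilon)$ with $\epsilon$, so that $(\Gamma_A\times\Gamma_B)_\epsilon$ is the right set of stack fragments for $A\times B$.

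Next I will fix the correspondence between interfaces. A stack $\sigma\in(\Gamma_A\times\Gamma_B)^*$ is the zip $\langle\sigma_A,\sigma_B\rangle$ of two equal-length stacks, namely its componentwise projections $\pi_A(\sigma)$ and $\pi_B(\sigma)$. Conversely, any two equal-length stacks zip to a unique product stack. With this in hand I prove both directions.

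\emph{Product to components.} Take a computation of $A\times B$ with states $(q^A_i,q^B_i)$, fragments $(\gamma^A_i,\gamma^B_i)$ and residual $\rho$. Projecting every transition gives transition sequences of $A$ and of $B$, by the defining equivalence of $\delta_{A\times B}$. Initiality and finality of $q_0$ and $q_n$ project componentwise. The equations $\sigma=\gamma_1\cdots\gamma_n\rho$ and $\sigma'=\gamma'_1\cdots\gamma'_n\rho$ project to the corresponding equations for $A$ with residual $\pi_A(\rho)$, and for $B$ with residual $\pi_B(\rho)$, because $\pi_A$ is a homomorphism that erases nothing except $\epsilon$.

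\emph{Components to product.} Take computations of $A$ from $\sigma_A$ to $\sigma'_A$ with residual $\rho_A$, and of $B$ from $\sigma_B$ to $\sigma'_B$ with residual $\rho_B$, both on the same $u$, where $\sigma_A,\sigma_B$ have equal length. Pairing the transitions step by step gives product transitions, by the definition of $\delta_{A\times B}$.

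The main obstacle is showing that the pieces line up into product stacks. By the class observation, step $i$ consumes a symbol in $A$ iff it consumes one in $B$. Hence $|\gamma^A_1\cdots\gamma^A_n|=|\gamma^B_1\cdots\gamma^B_n|$, and the residuals satisfy $|\rho_A|=|\rho_B|$ because $|\sigma_A|=|\sigma_B|$. The same argument for emitted fragments gives $|\sigma'_A|=|\sigma'_B|$. We can therefore zip $\rho=\langle\rho_A,\rho_B\rangle$, and the zipped stacks satisfy the product equations. This establishes the claimed correspondence, with the projections being exactly the component computations.
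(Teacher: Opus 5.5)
Your argument takes the same route as the paper's sketch: project each product transition to its components, pair component transitions that read the same pair symbol, and proceed along the input. It also supplies the stack bookkeeping the sketch leaves implicit, namely the $\epsilon$-alignment and the zipping of stacks. One step, however, is misjustified.

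The fact that $(q_A,q_B)\in I_A\times I_B$ iff both components are initial does not imply that every product transition pairs two transitions of the same class. Take a product state with exactly one initial component, say $q_A\in I_A$ and $q_B\in\tilde I_B$. It lies in $\tilde I_{A\times B}$. On a symbol of $\Sigma_{\local}\times\Omega_{\local}$, the literal definition of $\delta_{A\times B}$ pairs a $\delta_5$-transition of $A$, which consumes and emits a symbol, with a $\delta_4$-transition of $B$, which consumes and emits nothing. The result is the mixed fragment $(\gamma_A,\epsilon)$, which is not in $(\Gamma_A\times\Gamma_B)_\epsilon$.

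The missing observation is that every class $\delta_1,\dots,\delta_5$ has its target state in $\tilde I$. Hence, along any computation of $A$, $B$, or $A\times B$, the source state is initial in both components at the first step and non-initial in both at every later step. So the two transitions paired at step $i$ belong to the same class, which is determined by $x_i$. This is exactly what your claim that step $i$ consumes a symbol in $A$ iff it consumes one in $B$ needs in the components-to-product direction. There, the product computation is not yet available, so the reason must come from the component computations themselves.

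Alternatively, restrict $\delta_{A\times B}$ to same-class pairs, which is needed anyway for $A\times B$ to be a well-typed unit. The discarded transitions all start from mixed states in $(I_A\times\tilde I_B)\cup(\tilde I_A\times I_B)$, and no computation reaches these. With this repair, the rest of your proof goes through as written.
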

\begin{proof}[Proof sketch]
Every product transition projects to one transition of each component.
Conversely, component transitions reading the same folded symbol pair
form a product transition. The claim follows inductively over the input.
\end{proof}

\begin{theorem}
    The RILs are closed under union, intersection, concatenation, Kleene-$\ast$, and reversal.
    \label{thm:reg}
\end{theorem}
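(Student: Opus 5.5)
The plan is to treat each operation separately. Each time, a new layer is built from given layers $\mathcal A=(A,G_A,H_A)$ and $\mathcal B=(B,G_B,H_B)$. We rename stack alphabets apart so that $\Gamma_A\cap\Gamma_B=\emptyset$, and we use Lemma~\ref{lem:carr} to assume that initial and final carriers occur only in singleton stacks. The key invariant, used throughout, is this: the carrier on top of the interface after a factor determines which component's initial transitions ($\delta_5$) can fire on the next factor. Every other transition consumes only symbols of its own stack alphabet.

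\emph{Union.} Take the disjoint union of the units, with initial carriers $G_A\cup G_B$ and final carriers $H_A\cup H_B$. Since $\Gamma_A\cap\Gamma_B=\emptyset$, a computation started from a carrier in $G_A$ only ever produces $\Gamma_A$-stacks. By induction on the number of factors it is therefore a computation of $\semo A^{\circledast}$, and symmetrically for $B$. The empty word is accepted exactly when $G_A\cap H_A\neq\emptyset$ or $G_B\cap H_B\neq\emptyset$, which is preserved.

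\emph{Intersection.} Use the direct product $A\times B$ with carriers $G_A\times G_B$ and $H_A\times H_B$. The crucial observation is that call, return, replace and internal transitions are selected visibly by the symbol-pair class. Hence the two components consume and emit $\epsilon$ versus a stack symbol at exactly the same steps, so their interfaces always have equal length and can be zipped into $\Gamma_A\times\Gamma_B$. The residual suffix $\rho$ is shared because the accessed prefix lengths coincide. The preceding lemma on product computations then gives $\semo{A\times B}$ as the zip of $\semo A$ and $\semo B$. Induction over the factors lifts this to the closures and hence to $L_{\mathcal A}\cap L_{\mathcal B}$.

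\emph{Concatenation and Kleene-$\ast$.} For $L_{\mathcal A}L_{\mathcal B}$, take the disjoint union of the units with initial carriers $G_A$ and final carriers $H_B$. For every $B$-transition $q\xrightarrow{x/g/\gamma'}q'$ in $\delta_5$ with $g\in G_B$, add copies that consume any $h\in H_A$ instead of $g$. This is sound because by Lemma~\ref{lem:carr} the stack at a boundary carrying $h$ is the singleton $h$, so no $A$-residue survives below the $B$ computation. Once a $B$-symbol is on the stack, no $A$-transition can fire. The empty-word cases are handled by also admitting $G_B$ as initial carriers when $G_A\cap H_A\neq\emptyset$, and $H_A$ as final carriers when $G_B\cap H_B\neq\emptyset$. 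Kleene-$\ast$ is built analogously from one copy: $\delta_5$ transitions consuming $g\in G$ also consume $h\in H$. We add a fresh carrier $\iota$ that is both initial and final, is used for $\epsilon$, and behaves like $G$ in $\delta_5$.

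\emph{Reversal.} I expect this to be the main obstacle. If $w=x_1\cdots x_ny_n\cdots y_1\in\nabla$, then $w^{\rev}=y_1\cdots y_nx_n\cdots x_1$ folds to $(y_1,x_1)\cdots(y_n,x_n)$. This is the same sequence with each pair swapped, over the dual partitioned alphabet $\Pi^{\rev}$ in which the $\Sigma$- and $\Omega$-classes are exchanged. Under this swap calls become returns, returns become calls, and replaces and internals keep their type. Reversal also reverses the order of factors, so the unit must compute the inverse relation $\{(w^{\rev},\sigma',\sigma)\mid(w,\sigma,\sigma')\in\semo A\}$. I would therefore keep states, $I$ and $F$, and replace each transition $(q,(x,y),\gamma,\gamma',q')$ by $(q,(y,x),\gamma',\gamma,q')$; this respects the five transition classes, including $\delta_5$. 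I would then swap $G$ and $H$. The steps to check carefully are that the inverted computation has the same residual $\rho$, and that inverting a composition closure reverses its order: $(S\circ R)^{-1}=R^{-1}\circ S^{-1}$ on the reversed words. The latter gives $L_{\mathcal A}^{\rev}$ by induction on the number of factors.
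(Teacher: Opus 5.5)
Your proposal is correct and follows the paper's proof. Union, direct-product intersection and reversal (swap the pair components and the consumed/emitted symbols, keep $I$ and $F$, exchange $G$ and $H$) coincide with the paper's constructions, as do the empty-word adjustments for concatenation and the fresh $\iota$ for Kleene-$\ast$. The one difference is that you glue on the consuming side, letting the initial transitions of $B$ (resp.\ $A$) read a final carrier of $A$, whereas the paper copies the transitions of $A$ that emit some $h\in H_A$ so that they emit an initial carrier of $B$ (resp.\ $A$) instead; this is a dual variant that relies on Lemma~\ref{lem:carr} in the same way.
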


\begin{proof}[Proof sketch]
Let $\mathcal A=(A,G_A,H_A)$ and $\mathcal B=(B,G_B,H_B)$ be layers with units $A=(Q_A,I_A,F_A,\Pi,\Gamma_A,\delta_A)$ and $B=(Q_B,\allowbreak I_B,\allowbreak F_B,\Pi,\allowbreak\Gamma_B,\delta_B)$ sharing $\Pi$.   Without loss of generality, assume that their state sets and stack alphabets are pairwise disjoint.  Also, assume that their initial and final carriers do not occur in stacks that contain other symbols.


\textbf{Union} $L_{\mathcal A} \cup L_{\mathcal B}$ is given by layer 
$\mathcal C=(C,G_A\cup G_B,H_A\cup H_B)$ and unit $C=(Q_A\cup Q_B,I_A\cup I_B,F_A \cup F_B, \Pi, \Gamma_A\cup\Gamma_B, \delta_A\cup \delta_B)$.

\textbf{Intersection} $L_{\mathcal A}\cap L_{\mathcal B}$ is given by layer $\mathcal C=(A\times B,G_A \times G_B,H_A\times H_B)$. 

\textbf{Concatenation} $L_{\mathcal A}L_{\mathcal B}$ is given by layer $\mathcal C=(C,G_A\cup G_C,H_B\cup H_C)$ and unit $C=(Q_A\cup Q_B, I_A \cup I_B, F_A \cup F_B, \Pi, \Gamma_A\cup\Gamma_B, \delta_A\cup \delta_B\cup \delta_C)$ where $\delta_C=\{(q,x,\gamma,\gamma'',q') \mid (q,x,\gamma,\gamma',q') \in \delta_A, \gamma' \in H_A, \allowbreak \gamma'' \in G_B \}$, and where $G_C$ and $H_C$ are empty by default, but $G_C=G_B$ if $\epsilon \in L_{\mathcal A}$ and $H_C=H_A$ if $\epsilon \in L_{\mathcal B}$.

\textbf{Kleene-$\ast$} $L_{\mathcal A}^\ast$ is given by layer $\mathcal C=(C,G_A\cup\{\iota\},H_A\cup\{\iota\})$ and unit $C=(Q_A, I_A, F_A, \Pi, \Gamma_A\cup \{\iota\}, \delta_A \cup \delta_C)$ where $\delta_C=\{(q,x,\gamma,\gamma'',q') \mid (q,x,\gamma,\gamma',q') \in \delta_A, \gamma' \in H_A, \gamma'' \in G_A \}$ and $\iota$ is a fresh symbol not in $\Gamma_A$. 

\textbf{Reversal} $L_{\mathcal{A}}^{\rev}$ is given by layer 
$\mathcal A^{\rev}=(A^{\rev},H_A,G_A)$ 
and unit $A^{\rev}=(Q_A,I_A,F_A,\Pi^{\rev},\Gamma_A,\delta_A^{\rev})$ where 
$\Pi^{\rev} = (\Omega_{\stand},\Sigma_{\stand}, \Omega_{\lean},\Sigma_{\lean}, \Omega_{\local},\Sigma_{\local})$, and 
$\delta_A^{\rev} = \{(q\Xrightarrow{(b,a)/\gamma'/\gamma} q') \mid  (q\Xrightarrow{(a,b)/\gamma/\gamma'} q') \in \delta_A\}$.  
In reversal, the alphabet partition is reversed accordingly, so that the roles of
the left and right components are exchanged.  
The initial and final states of the unit are not exchanged: the
reversed unit still processes a folded factor in the same operational
direction. Reversal instead reverses the order of layer interfaces.
\end{proof}

\begin{theorem}\label{thm:vpl-intersection}
RILs are closed under intersection with visibly pushdown languages
over a compatible partitioned alphabet.  If the RIL is deterministic, 
the intersection is deterministic as well.
\end{theorem}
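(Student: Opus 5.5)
The plan is to simulate the visibly pushdown automaton inside the RIA unit by a product construction, in the spirit of the intersection case of Theorem~\ref{thm:reg}. Let $\mathcal A=(A,G,H)$ be an RIA layer and let $P$ be a VPA over the unfolded alphabet $\Sigma\cup\Omega$. The compatible partition makes the symbols of $\widehat\Omega$ calls, those of $\widehat\Sigma$ returns, and the local symbols internal, so that $P$'s stack matching agrees with $\Delta'$. By Lemma~\ref{lem:carr} I assume that initial and final carriers occur only in singleton stacks, and by Lemma~\ref{lem:completion} that $A$ is complete. The new unit keeps the finite control of $P$ in the carrier symbol across factor boundaries, and keeps $P$'s stack symbols alongside the interface symbols.

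\textbf{Step 1 (what $P$ does on a factor).} I examine $P$'s run on a factor $\tau=x_1\cdots x_ny_n\cdots y_1$. Reading $x_1\cdots x_n$, $P$ pops on each $\widehat\Sigma$ symbol, and these are exactly the returns and left leaning components. Reading $y_n\cdots y_1$, it pushes on each $\widehat\Omega$ symbol. Thus $P$'s stack at each factor boundary mirrors the interface: every symbol pushed by the call pair $(x_i,y_i)$ sits at the same depth as the interface symbol the RIA emits at position~$i$. A leaning pair $(x_i,y_i)$ first pops and then, after the center, pushes at the same depth, so it behaves like a replace. The difficulty is that $P$ reads the factor in the order $x_1,\dots,x_n,y_n,\dots,y_1$, which does not match the left-to-right folded order.

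\textbf{Step 2 (folding the run, the main obstacle).} For the left half I guess $P$'s state $p_i$ before each $x_i$ and track it forward. For the right half, which $P$ reads backwards, I instead guess the state $r_i$ reached after reading $y_i$ and verify backwards: at position $i$ the unit checks $r_{i-1}\xrightarrow{}$ is consistent with the $P$-transition from $r_i$ on $y_i$. At the center, $p_n$ after $x_n$ must equal the state before $y_n$. This is the standard folding of a two-way read into pair states, and the product state is $(q,p,r)$.

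The pushed $P$-stack symbol for $y_i$ depends on the state before $y_i$, which is $r_i$, the guess for the next position. I therefore carry a one-step lookahead guess and emit the pair $(\gamma',z)$ with $z$ fixed by that guess. When $x_i$ pops, the popped $P$-symbol is the second component of the consumed interface symbol. The initial transitions in $\delta_5$ read $(\rd,\ld)$ and transfer $P$'s boundary state through the carrier: they consume a carrier $(c,s)$ with $s=p_0$ and emit $(c',s')$ with $s'=r_0$, the state after $\ld$. Initial carriers record $P$'s initial states, and final carriers record $P$'s final states with an empty $P$-stack. The empty $P$-stack condition holds automatically, since the carrier is a singleton.

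\textbf{Step 3 (determinism).} Functionality is preserved: for a fixed input stack, the guesses are forced once the run of $P$ (deterministic VPA) is fixed, so the output stack is unique. Nondeterminism in the unit is allowed because determinism is semantic. The initial carrier set stays a singleton when $P$ has a unique initial state. To finish, I would verify by induction over the factors that accepting layer runs correspond exactly to pairs consisting of an accepting $\mathcal A$ run and an accepting $P$ run.
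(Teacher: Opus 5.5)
Your proposal is correct and is essentially the paper's argument. The paper folds the VPA's run on each center-foldable factor into pair transitions, with VPA states carried in the carrier symbol and VPA stack symbols stored in the interface, to build a layer for $K\cap\nabla^*$, and then intersects it with $\mathcal A$ by the product of Theorem~\ref{thm:reg}; you inline the same folded simulation directly into the product with $A$, and your argument that determinism of $P$ and the centre check force the backward guesses fills in what the paper only asserts. The one slip is in Step~2: $r_i$ should be defined as the state \emph{before} reading $y_i$ (with $r_0$ the state after $y_1$), which is how you actually use it in the verification, the pushed symbol, and the emitted carrier.
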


\begin{proof}[Proof sketch]
Let $\mathcal A$ be a RIA layer. Recall that $L_{\mathcal A}\subseteq \nabla^*\cap\Delta'$ where $\nabla^*$ and $\Delta'$ are VPLs over compatible visible
partitions of the same alphabet.  Let $K\subseteq\Delta'$ be a VPL over a partitioned alphabet $\tilde{S}=(\Omega_\stand\cup\Omega_\lean, \Sigma_\stand \cup \Sigma_\lean, \Omega_\local \cup \Sigma_\local)$. 
We construct a RIA layer $\mathcal C=(C,G,H)$ recognizing $K\cap\nabla^*$.

Let $\mathcal B$ be a VPA recognizing $K$, let $T$ be its set of control states, and let $\Xi\subseteq ((\Sigma \cup \Omega) \times \Gamma \cup (\Sigma \cup \Omega))$ be the alphabet of transition labels in
its control graph. For every ordered pair $(s,t)\in T^2$, regard $s$ and $t$ as the initial and final states of the induced subgraph, respectively. The resulting path label language $L_{s,t}\subseteq\Xi^*$ is regular.
Let $h:\Xi^*\to(\Sigma\cup\Omega)^*$ map each transition label to its input symbol. Restrict $L_{s,t}$ to path labels whose input belongs to $\Sigma_{\local}\Sigma^*\Omega^*\Omega_{\local}$.
The center-foldable path labels are described by a linear context-free grammar whose recursive productions pair any $\Sigma$-transition label with any $\Omega$-transition label:
$S\rightarrow \xi S \xi' \mid \epsilon$, where $h(\xi)\in \Sigma$ and $h(\xi')\in \Omega$.
Intersecting this with the restricted language yields
linear language $L_{s,t}\cap h^{-1}(\nabla)$.  Next, an RIA unit is used to represent this language.

Most rules the linear grammar translate directly to RIA unit transitions,
but some rules have to be rejected due to the unit's partitioned transitions that 
do not allow all symbol pairs as input.
The initial transitions on $(\Sigma_\local\times\Omega_\local)$ additionally encode the states
$s,t$ as the input and output carrier symbols, respectively. Taking disjoint
copies of the units for all pairs $(s,t)$ gives the unit $C$.  
The initial and final states of $\mathcal B$ then determine the sets $G$ and $H$ needed in the layer. 
Hence $\mathcal C$ recognizes $K\cap\nabla^*$.
Finally, $L_{\mathcal C}\cap L_{\mathcal A}=(K\cap \nabla^*) \cap L_{\mathcal A}=K\cap (\nabla^* \cap L_{\mathcal A})= K\cap L_{\mathcal A}$,
because $L_{\mathcal{A}} \subseteq\nabla^*$.
Since the VPA $\mathcal{B}$ can be assumed to be deterministic, the intersection preserves determinism if $L_{\mathcal A}$ is deterministic.
\end{proof}

\begin{theorem}\label{thm:2vpl}
    $\mathsf{RIL}\subseteq \mathsf{2MVPL}$. For any RIL, we have an equivalent two-stack visibly pushdown automaton.
\end{theorem}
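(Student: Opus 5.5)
Given an RIA layer $\mathcal A=(A,G,H)$, I would build a two-stack visibly pushdown automaton $\mathcal M$ that reads the unfolded input $w=\tau_1\cdots\tau_m$ left to right. Stack~1 holds the persistent interface, with its top at the top of the stack. Stack~2 is an auxiliary stack used only inside a factor. The visible partition for stack~2 is the one of $\nabla$: symbols of $\Sigma$ push, symbols of $\Omega$ pop. The visible partition for stack~1 is the one of $\Delta'$: $\widehat\Omega$ pushes, $\widehat\Sigma$ pops, and local symbols are internal. Both partitions are fixed by the alphabet, so $\mathcal M$ is a synchronous 2-stack VPA. The simulation has to respect one key mismatch. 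In a folded pair $(x_k,y_k)$ the left symbol $x_k$ is read in the first half of $\tau_i$ and the right symbol $y_k$ in the second half, in reverse order. The unit, however, accesses the interface positions $\gamma_1,\gamma_2,\dots$ in the order $k=1,\dots,n$.

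The first half of a factor, reading $x_1\cdots x_n$, is handled by guessing. Whenever $\mathcal M$ reads $x_k$, it guesses the transition $q_{k-1}\xrightarrow{(x_k,y_k)/\gamma_k/\gamma_k'}q_k$ of $A$. The finite control checks consistency of the states $q_{k-1},q_k$ and of the class of $x_k$. The whole guessed transition, including $y_k$ and $\gamma_k'$, is pushed onto stack~2. The consumed fragment is handled according to the class of $x_k$:
\begin{itemize}
\item if $x_k\in\Sigma_{\stand}$ (return), $\gamma_k\neq\epsilon$ is popped from stack~1 and checked;
\item if $x_k\in\Sigma_{\lean}$ (replace), the top of stack~1, i.e.\ $\gamma_k$, is popped;
\item if $x_k\in\Sigma_{\local}$ with $\gamma_k\neq\epsilon$ (only the initial transition $\delta_5$ at $k=1$), stack~1 cannot be popped visibly, since $x_k$ is internal.
\end{itemize}
For the last case I would invoke Lemma~\ref{lem:carr} together with the observation that the carrier always sits at the top of the interface. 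So $\mathcal M$ keeps the current carrier in its finite control rather than on stack~1, and the $\delta_5$ step only checks and updates that control component. The nonempty stack-1 content is then exactly $\sigma$ minus its top carrier. Calls $(\Sigma_{\local}\times\Omega_{\stand})$ consume nothing, so nothing happens to stack~1 at that point. At the center, $\mathcal M$ checks $q_n\in F$.

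The second half, reading $y_n\cdots y_1$, pops stack~2 in exactly the reverse order, so each $y_k$ meets its guessed transition on top of stack~2. $\mathcal M$ verifies that the read symbol equals the guessed $y_k$, and writes $\gamma_k'$ to stack~1:
\begin{itemize}
\item for a call, $y_k\in\Omega_{\stand}$ visibly pushes $\gamma_k'$;
\item for a replace, $y_k\in\Omega_{\lean}$ pushes $\gamma_k'$;
\item for a return, $y_k\in\Omega_{\local}$ pushes nothing, matching $\gamma_k'=\epsilon$.
\end{itemize}
Because the pushes happen for $k=n,\dots,1$, the output interface is $\gamma_1'\cdots\gamma_n'\rho$ with $\gamma_1'$ on top, as required. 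The residual $\rho$ is never touched. Stack~2 is empty again at the factor boundary, and $y_1=\ld\in\Omega_{\local}$ ends the factor.

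Acceptance works as follows. $\mathcal M$ starts with an initial carrier $g\in G$ in its control and both stacks empty. It accepts at the end if the carrier in control is in $H$ and stack~1 is empty, which matches $\sigma'\in H^1$. A word outside $\nabla^*$ leaves stack~2 unbalanced, and the finite control also tracks first- and second-half phases to enforce the shape $\Sigma_{\local}\Sigma^*\Omega^*\Omega_{\local}$. Correctness then follows by induction on the number of factors: the configurations of $\mathcal M$ at factor boundaries correspond exactly to the intermediate interfaces of $\semo A^{\circledast}$.

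I expect the main obstacle to be the carrier at the initial transition, where an internal symbol would have to pop stack~1. It is resolved only by keeping the carrier in control, as above. A further point to check is the empty-word and $H$/$G$ boundary cases, where Lemma~\ref{lem:carr} is again what is needed.
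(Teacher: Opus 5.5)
Your proposal is correct and follows essentially the paper's construction. A persistent first stack, with $\widehat\Omega$ as calls and $\widehat\Sigma$ as returns, holds the interface; an auxiliary stack, with $\Sigma$ as calls and $\Omega$ as returns, temporarily stores the per-pair information and is empty at every factor boundary. Keeping the top interface symbol in the finite control fills in a point the paper's sketch leaves implicit, and it does not actually need Lemma~\ref{lem:carr}: the first transition of every local computation lies in $\delta_5$ and consumes and emits exactly the top interface symbol.
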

\begin{proof}[Proof sketch]
    Let $\mathcal{A}$ be a recurrent incidence automaton. We construct a two-stack visibly pushdown automaton $B$ from the operational semantics of RIAs. Each factor executes one complete local computation of the underlying two-stack visibly pushdown automaton, but using the auxiliary stack as a temporary storage in the way discussed in \cite{YliJyraFSMNLP2019}. The first stack treats $\widehat\Omega$ as calls and $\widehat\Sigma$ as returns, and the internal stack treats $\Sigma$ as calls and $\Omega$ as returns. Chaining these computations yields a two-stack visibly pushdown automaton whose first stack persists between factors and whose second stack is emptied after each factor. The factorization of the input guarantees that the local computations compose correctly.   The reduction is effective.
\end{proof}

\subsection{Deterministic RILs}

The family of deterministic RILs is denoted by $\mathsf{DRIL}$, and its languages are called DRILs.    

\begin{theorem}\label{thm:boolean}\label{thm:boo}
DRILs are closed under all Boolean operations.
\end{theorem}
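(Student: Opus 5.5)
Since intersection and union are expressible through complement and intersection, the plan is to show that DRILs are closed under intersection and under complementation. The remaining Boolean operations then follow by De Morgan's laws. Complement is taken relative to the universe of RIA-admissible strings, i.e.\ relative to $L_{\mathcal U}$ of the maximal layer. This is the natural universe, because every RIL is contained in it.

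\textbf{Intersection.} I would reuse the product construction from the proof of Theorem~\ref{thm:reg}: $\mathcal C=(A\times B,G_A\times G_B,H_A\times H_B)$. By the product lemma, computations of $A\times B$ are exactly pairs of component computations on the same folded factor. Accessed prefixes may differ in length between the two components. I would pad the shorter stack to a common length by treating the product alphabet as pairs over $\Gamma_{A,\epsilon}\times\Gamma_{B,\epsilon}$, or by interleaving the two stacks as tracks. The product is then still driven by the same visible pair symbols, so pushes and pops occur in lock step. Hence both stacks always have equal height, and the two tracks are synchronized.

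Functionality is inherited: if $\semo A$ and $\semo B$ are partial functions, the pair of outputs is determined by the pair of inputs. Likewise, $G_A\times G_B$ is a singleton when both $G_A$ and $G_B$ are. Therefore $\mathcal C$ is deterministic and recognizes $L_{\mathcal A}\cap L_{\mathcal B}$.

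\textbf{Complement.} Let $\mathcal A=(A,\{g\},H)$ be deterministic.
\begin{enumerate}
\item By Lemma~\ref{lem:completion}, replace $A$ by a complete deterministic unit. On every admissible factor and every stack, some computation then exists, and by functionality it yields a unique output stack. Thus $\semo A$ becomes a total function on admissible inputs.
\item By Lemma~\ref{lem:carr}, assume that carriers in $G\cup H$ occur only in singleton stacks.
\item Determinism and totality together imply that for each admissible $w=\tau_1\cdots\tau_m$ there is exactly one sequence of interfaces $\sigma_0=g,\sigma_1,\dots,\sigma_m$.
\item Conclude that $w\in L_{\mathcal A}$ iff $\sigma_m\in H^1$.
\item Define the complement layer as $(A,\{g\},(\Gamma\cup\{\dagger\})\setminus H)$. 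Then $w$ is accepted iff $\sigma_m$ is a singleton whose symbol lies outside $H$.
\end{enumerate}

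Two gaps in step~5 must be closed. First, $\sigma_m$ may fail to be a singleton. I would intersect with $L_{\mathcal U}$, which is allowed since intersection was just shown to preserve determinism, and argue that on strings in $\Delta'$ the call and return discipline leaves exactly one carrier at the end. Second, in the completion component the stack may contain $\dagger$ below the top. I would arrange the completion so that it rewrites the carrier to $\dagger$, which places $\dagger$ in the accepting set of the complement.

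\textbf{Main obstacle.} The hard step is ensuring that ``no accepting run'' coincides with ``the unique run ends outside $H^1$''. Functionality is semantic, and it constrains only accepting computations. A unit that is merely functional can still have several non-accepting partial runs on a factor, or runs that reject mid-factor, so the complete layer must force a unique total run to exist. I would first try to resolve this by determinizing the unit per factor. Since $\semo A$ restricted to each fixed input stack top behaves like a finite-state transduction on folded factors, I would apply a subset construction whose state tracks the set of reachable local states. In that construction, only replace transitions with uniquely determined outputs are allowed, and functionality is exactly what guarantees this uniqueness. Completion via Lemma~\ref{lem:completion} then yields the required unique run.
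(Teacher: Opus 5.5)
Your skeleton is the paper's. It uses a product for intersection, and for complement a completed unit with singleton carriers and a swapped final carrier set. The paper phrases this as $L_{\mathcal A}\setminus L_{\mathcal B}$ via $(A\times B,G_A\times G_B,H_A\times(\Gamma_B\setminus H_B))$ and takes the complement relative to the maximal layer; this is equivalent to your flip-then-intersect-with-$L_{\mathcal U}$. You also correctly single out the step that carries all the weight. The argument needs $\semo{B}$ to be \emph{total}: every admissible factor and every sufficiently high stack must admit an \emph{accepting} computation. Syntactic completeness in the sense of Lemma~\ref{lem:completion} only guarantees that some run exists, and that run may end in a non-final state. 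The paper's sketch simply asserts that ``every input has a unique output stack''. Incidentally, no padding is needed in the product. Whether a step consumes or emits a stack symbol is fixed by the visible pair class and by initial-ness, so the components run in lockstep automatically.

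The gap is your repair, which would fail. A subset construction can track the set of reachable local states, but it cannot commit to the symbol emitted at the current step. Functionality only says that all \emph{accepting} runs agree on the whole output. The runs in a subset at step $i$ may disagree, and which of them survive depends on the rest of the factor. The extreme case is the carrier itself. It is emitted by the initial transition, before anything else in the factor has been read. So a functional unit whose output carrier depends on, say, the last pair of the factor has no equivalent unit with step-wise determined outputs. Functional transductions are not sequential in general, so ``functionality is exactly what guarantees this uniqueness'' is false. The same observation defeats your fix for $\dagger$: a completion component entered mid-factor cannot rewrite a carrier that has already been emitted.

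What does work is to keep the unit nondeterministic and make it total by guess-and-verify.
\begin{enumerate}
\item Add a disjoint branch whose initial transition emits a fresh carrier $\dagger$.
\item The branch then runs a DFA over pairs (visible pair symbol, consumed fragment in $\Gamma_\epsilon$), obtained by a subset construction for the \emph{acceptor} of the original unit. This DFA accepts exactly when the original unit has \emph{no} accepting computation on the factor and its accessed stack prefix.
\item Otherwise the branch copies the stack: replaces copy the consumed symbol, and calls push a fixed symbol.
\end{enumerate}
On every factor and stack of sufficient height exactly one of the two parts has accepting runs, so the extended unit is functional and total. Moreover, $\dagger$ remains the carrier forever, because the original unit has no transitions consuming it. Your steps 3--5 then go through with $\dagger$ placed outside $H$.
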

\begin{proof}[Proof sketch]
Let $\mathcal A=(A,\allowbreak G_A,\allowbreak H_A)$ be a RIA layer and
$\mathcal B=(B,G_B,H_B)$ a deterministic RIA layer over the same partitioned
folded alphabet. By Lemmas~\ref{lem:completion} and~\ref{lem:carr}, assume that $\mathcal B$ is complete and that symbols
of \(H_B\) occur only in singleton stacks. Hence every input has a  
unique output stack under \(B\), and this stack is accepting exactly
when its carrier belongs to \(H_B\).

\textbf{Relative complement.} $L_{\mathcal A}\setminus L_{\mathcal B}$
is recognized by $(A\times B,\; G_A\times G_B,\; H_A\times(\Gamma_B\setminus H_B)).$
Layer $\mathcal B$ rejects precisely when the carrier of its output stack
lies outside $H_B$.  
The product preserves determinism if $\mathcal A$ is
deterministic. Thus DRILs are closed under intersection and
relative complement. Complement is relative to the maximal
layer, and union follows by De Morgan's law.
\end{proof}

\subsection{Unit-Acyclic RILs}
A layer is \emph{unit-acyclic} if the transition graph of its unit is acyclic.  
The family of unit-acyclic RILs (UARILs) is denoted by $\mathsf{UARIL}$.

\begin{theorem}\label{thm:vpl}
    $\mathsf{UARIL}\subseteq\mathsf{VPL}$.
\end{theorem}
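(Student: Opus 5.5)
The plan is to build a one-stack visibly pushdown automaton for $L_{\mathcal A}$ with the partition of Theorem~\ref{thm:vpl-intersection}: $\widehat\Omega$ as calls, $\widehat\Sigma$ as returns, $\Sigma_\local\cup\Omega_\local$ as internals. The key fact is unit-acyclicity. Every accepting local computation then has length at most $|Q|$, so every center-foldable factor $\tau$ with $(\tau,\sigma,\sigma')\in\semo A$ satisfies $|\tau|\le 2|Q|$. Hence the relevant factors form a finite set. By the operational definition, each local computation reads and writes only a prefix of length at most $|Q|$ of the interface. So the whole local transformation is a finite table of entries $(\tau,\alpha,\beta)$, each meaning that the unit rewrites the accessed prefix $\alpha$ into $\beta$ on $\phi(\tau)$.

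The VPA $B$ works as follows. Its finite control stores the current carrier, that is, the top interface symbol. Its stack stores the remaining interface below the carrier, top first. While reading a factor $\tau$, $B$ buffers the left half $x$ in its finite control; this is possible because $|x|\le|Q|$. It then matches the right half $y$ against the guessed table entry. Along the way it performs the net stack effect of the entry, which is a sequence of visible pushes and pops.

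This works because of the visible discipline. Interface symbols are consumed only by returns in $\Sigma_\stand$, which are left symbols. Interface symbols are emitted only by calls in $\Omega_\stand$, which are right symbols. Replace transitions $(\Sigma_\lean\times\Omega_\lean)$ consume at a left-lean symbol and emit at a right-lean symbol. So on reading $x$ left to right, $B$ pops the consumed prefix $\gamma_1\cdots\gamma_n$ in exactly the order the returns and left-leans occur. On reading $y$, which is in reverse order $y_n\cdots y_1$, it pushes $\gamma'_n,\dots,\gamma'_1$ so that $\gamma'_1$ ends on top. The carrier handling of $\delta_5$ is done in the finite control: the pair $(\rd,\ld)$ brackets the factor, so the carrier is consumed at $\rd$ and the new carrier is emitted at $\ld$. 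Acceptance requires an initial carrier in $G$, a final carrier in $H$, and an empty stack below it. The last condition is the requirement $\sigma'\in H^1$, which I may assume is enforced by Lemma~\ref{lem:carr}.

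The main obstacle is the timing of pops and pushes. Pops must occur exactly at $\Sigma_\stand\cup\Sigma_\lean$ positions and pushes exactly at $\Omega_\stand\cup\Omega_\lean$ positions, while the unit's state sequence is determined only after the whole folded factor is seen. I would resolve this by nondeterministic guessing. On reading $x$, $B$ guesses the entire accepting run $q_0,\dots,q_n$ and all $\gamma_i,\gamma'_i$; it pops and checks each $\gamma_i$ at its left symbol and keeps the guessed run in its finite control. On reading $y$, it verifies that each $y_i$ agrees with the guess and pushes $\gamma'_i$. Since $n\le|Q|$, the guess fits in the finite control.

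One more point needs checking: an internal or local right symbol cannot push, and a left-lean symbol both pops and pushes. I would handle the latter by treating the left-lean symbol as a pop and its matching right-lean symbol as a push, which matches the call/return roles of $\Delta'$. Having $\Delta'$-balance at every acceptance then follows from $L_{\mathcal A}\subseteq\Delta'$. Combining these steps gives a VPA, hence $\mathsf{UARIL}\subseteq\mathsf{VPL}$.
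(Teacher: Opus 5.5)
Your proposal is correct and follows the paper's approach. Both arguments simulate the persistent interface on a single VPA stack with $\widehat\Omega$ as calls and $\widehat\Sigma$ as returns, and both use unit-acyclicity to bound factor length so that the local prefix rewriting $\alpha\mapsto\beta$ fits in the finite control. Your carrier-in-control treatment and run-guessing for pop/push timing just fill in the paper's sketch; at acceptance you can check the empty stack with a stack-emptiness bit rather than citing Lemma~\ref{lem:carr}, whose preservation of acyclicity the paper does not state.
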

\begin{proof}[Proof sketch]
The construction follows Theorem~\ref{thm:2vpl}, replacing the second stack by finite-state control. While the first stack checks global bracket balance, the control stores the output prefix $\beta$ of the interface transformation~\eqref{eq:transform} as it reads the input prefix $\alpha$. Since the unit is finite and acyclic, only finitely many local interface transformations can arise, so each fits in the VPA control.
\end{proof}

\begin{theorem}\label{thm:uaclos}
    UARILs are deterministic RILs, and they are closed under relative complement.
\end{theorem}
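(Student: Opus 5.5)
We prove the statement in two steps. First we show that every UARIL is a DRIL. Then relative complement follows from Theorem~\ref{thm:boolean}.

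\textbf{Determinization.} Let $\mathcal A=(A,G,H)$ be a unit-acyclic layer. Because the transition graph of $A$ is acyclic, every local computation has length at most $|Q|$. So every folded factor that $A$ accepts has at most $|Q|$ pair symbols. By the definition of a local computation, $A$ accesses a prefix $\alpha$ of the input interface with $|\alpha|\le|Q|$ and writes a prefix $\beta$ with $|\beta|\le|Q|$. The relation $\semo A$ is therefore a finite union of triples $(w,\alpha\rho,\beta\rho)$, where $w$ ranges over a finite set of factors and $(\alpha,\beta)$ over a finite set of bounded-length pairs.

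Following the idea behind Theorem~\ref{thm:vpl}, we build a new unit $A'$ whose stack alphabet consists of nonempty subsets of $\Gamma$ (a subset construction on carriers) and whose states record the set of possible runs so far. There is one difficulty: different runs of $A$ on the same factor may access prefixes of different lengths or pop different numbers of symbols. In that case a single deterministic $A'$-run cannot track all of them position by position.

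To handle this, note that the shape of the interface access is fixed by the input. The symbol-pair classes (call, return, replace, internal) of the folded factor determine exactly which positions are consumed and which are emitted. So all runs of $A$ on a given folded factor read the same number of stack symbols and emit the same number, at the same steps. Only the state and the actual $\Gamma$-symbols differ, and all runs share a common residual $\rho$.

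This lets us run a synchronized subset construction. A state of $A'$ is a set of partial runs of $A$, where each partial run records the state of $A$ reached together with the symbols of $\alpha$ read so far and the symbols of $\beta$ written so far. There are finitely many such runs by acyclicity. When $A'$ reads a return or replace position, it consumes an actual stack symbol and prunes the runs that are inconsistent with it. For emitted symbols, the difficulty is that a call transition in $A$ pushes a symbol that depends on the run. To avoid committing, $A'$ emits a symbol encoding the set of pairs $(\text{run},\gamma')$ for all surviving runs. Subsequent factors read these encoded symbols and resolve them consistently.

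\textbf{Main obstacle.} The pushed symbols depend on the branch taken. So determinism must be achieved by stacking sets of (branch identifier, symbol) pairs. When a later factor returns such a symbol, it must be matched against branch-correlated information stored elsewhere in the interface. I would try to keep this bounded by using acyclicity, which guarantees that the set of run identifiers is finite (paths in a DAG). Each stack symbol of $A'$ is then a function from a finite set of global "branch histories" to $\Gamma$. To keep these histories finite across unboundedly many factors, the history should be quotiented to the current carrier, as in the usual VPA determinization via summaries (Alur–Madhusudan). The carrier at the top stores a set of pairs $(\gamma_{\mathrm{top}},\text{carrier})$, and each pushed symbol stores a relation between old and new carrier sets. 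Checking that this summary relation correctly reconstructs $\semo A^{\circledast}$ restricted to singleton carriers is the key lemma, by induction on the number of factors. It uses Lemma~\ref{lem:carr} so that the final check is a test on the singleton carrier, yielding a functional unit and a singleton initial carrier set, i.e.\ a DRIL.

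\textbf{Relative complement.} Given UARILs $L_{\mathcal A}$ and $L_{\mathcal B}$, we take the deterministic layer for $\mathcal B$ constructed above. Theorem~\ref{thm:boolean} then gives $L_{\mathcal A}\setminus L_{\mathcal B}$ via the product $(A\times B', G_A\times G_{B'}, H_A\times(\Gamma_{B'}\setminus H_{B'}))$. The completion of Lemma~\ref{lem:completion} should be done with an acyclic sink: every factor has bounded length, so the completion component can be unrolled into a DAG of depth $|Q|$ and reject longer factors outright. A product of two acyclic units is acyclic, so the result remains a UARIL.
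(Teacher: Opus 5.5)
\textbf{Gap: the determinization step is not proved.} The relative-complement part of your argument is fine once a deterministic layer for $L_{\mathcal B}$ exists. The problem is the step that everything depends on, showing that a UARIL is a DRIL.

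Your first construction emits, at each emitted position, the set of pairs $(\text{run},\gamma')$ and prunes on later reads. This does not give a correct functional unit. One factor fixes several emitted symbols jointly: the new carrier together with the pushed or replaced symbols. Different later factors may consume these symbols. When a later factor consumes one of them, it learns a constraint on which branch was taken. That constraint cannot be written back to the sibling symbols, which may lie arbitrarily deep. The only alternative is to keep it in the carrier, which would then have to track constraints for unboundedly many pending factors. Hence the per-position sets admit inconsistent combinations.

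You notice this and propose quotienting to Alur--Madhusudan-style carrier summaries, but you only name this repair. You never define the summary alphabet of $A'$. You never give its call, return and replace transitions. In particular, you do not show how a summary stored with an emitted symbol re-validates the joint choice of the whole factor when that symbol is consumed later; the call/return pairing that summaries exploit holds only between whole factors here. The key lemma, that the construction reproduces $\semo{A}^{\circledast}$ on singleton carriers, is stated but not proved. As written, this is a research plan rather than a proof.

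\textbf{The missing idea.} You do not need to determinize inside the RIA model at all. The paper uses Theorem~\ref{thm:vpl} ($\mathsf{UARIL}\subseteq\mathsf{VPL}$) and determinizes the resulting VPA by the standard VPA construction. It then transfers back through Theorem~\ref{thm:vpl-intersection} applied to the (deterministic) maximal layer, which yields a deterministic RIA for the same language. For relative complement, the paper complements a deterministic VPA for $L_{\mathcal B}$ and intersects it with the unit-acyclic layer $\mathcal A$ via Theorem~\ref{thm:vpl-intersection}. Your route through Theorem~\ref{thm:boolean} would also work once $\mathcal B$ has been determinized in this way.

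\textbf{A note on acyclicity.} In either route, acyclicity comes for free. Every transition of the product unit projects onto a transition of $A$, so a cycle in the product would give a cycle in $A$, whatever the other component looks like. Your acyclic truncation of the completion component is therefore unnecessary. It is also potentially harmful: a DAG with dead-end states is not complete, and completeness is what Theorem~\ref{thm:boolean} relies on. The truncation is safe only if its depth is at least the factor-length bound of $A$.
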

\begin{proof}[Proof sketch]
By Theorem~\ref{thm:vpl}, every unit-acyclic RIL is a VPL.
Determinizing its VPA and applying
Theorem~\ref{thm:vpl-intersection} with the maximal layer yields an
equivalent deterministic RIA.  For $L_{\mathcal A},L_{\mathcal B}\in\mathrm{UARIL}$, complement a
deterministic VPA for $L_{\mathcal B}$ and intersect it with the
unit-acyclic layer $\mathcal A$ using Theorem~\ref{thm:vpl-intersection}. 
The resulting layer recognizes
$L_{\mathcal A}\setminus L_{\mathcal B}$ and remains unit-acyclic,
because every transition of the constructed unit is synchronized with
a transition of the unit of $\mathcal A$. Hence any directed cycle in
the constructed unit would project to a directed cycle in the unit of
$\mathcal A$, contradicting its acyclicity.
\end{proof}

\subsection{Graph-Encoding RILs}
We return to the graph encodings that motivated RIAs by defining graph-encoding RILs.

For graph-encodings, we use a minimal alphabet, thereby eliminating symbol-level ambiguity.
Define the partitioned alphabet $\Pi=(\{\srB\}, \{\slB\}, \{\srb,\rz\}, \{\slb,\lz\}, \{\rd\}, \{\ld\})$.
A bijective mapping from a PRC partitioning to a bracketing is guaranteed in \cite{YliJyraFSMNLP2019} using 
three symmetry-breaking conventions:

\begin{itemize}\setlength\itemsep{0em}
\item \textsc{NoTail}.
The subfactor $\quo{\rz\lz}$ is forbidden, preventing
unnecessary stack access.

\item \textsc{WriteTop}.
The opening standalone bracket $\qlB$ is treated as a push operation: 
It adds a new edge to the top of the existing edges and can only occur as the last 
operation in the vertex.

\item \textsc{Earliest}.
An edge that can lean on either of two standalone edges is encoded with
respect to the shared right endpoint.
This removes redundancy between the two possible encodings.
\end{itemize}

We now define a deterministic recurrent incidence automaton 
$\mathcal S=(S,\{\iota\},\{\iota\})$. Its unit is $S=(Q,\{0\},F,\Pi,\{\iota\},\delta),$
where $Q=\{0,\dots,5\}$ and $F=\{1,2,4\}$.
The finite-state control enforces the three encoding conventions:

\begin{itemize}\setlength\itemsep{0em}
\item State \(0\) takes the mandatory initial transition on
      \((\rd,\ld)\), propagating the carrier symbol.
\item State \(1\) enforces \textsc{WriteTop}: a call on \((0,\slB)\)
      may occur only right after the initial transition.
\item States \(1\) -- \(3\) enforce \textsc{Earliest} by 
      collecting all incidences that can lean right to this vertex by a return on \((\srB,0)\).
\item State \(4\) admits the remaining leaning incidences.
\item States \(3\) and \(5\) are nonfinal, preventing a factor from
      terminating immediately after the redundant pattern
      \((\srb_{0},\slb_{0})\), as required by \textsc{NoTail}.
\end{itemize}
\if x
\begin{itemize}\setlength\itemsep{0em}
\item State~0 starts with an initial transition on $\quo{(\rd,\ld)}$ that passes the carrier symbol.

\item State~1 satisfies \textsc{WriteTop} via an early call transition on $\quo{(0,\slB)}$.

\item States~1--3 enforce \textsc{Earliest} by avoiding $\quo{(\srb,\lz)}$ and $\quo{(\srb,\slb)}$
before $\quo{(\srB,\underline0)}$.

\item State \(4\) admits the remaining leaning incidences.

\item Due to \textsc{NoTail}, nonfinal states~3 and~5 are must-visits after useless replace transitions on $\quo{(\rz,\lz)}$.

\end{itemize}
\fi 
The transition relation is
\begin{gather}
\scalebox{.8}{$\begin{alignedat}{5}
\delta=\{
&(0\Xrightarrow{(\rd,\ld)/\iota/\iota}1),
(1\Xrightarrow{(0,\slB)/\epsilon/\iota}2),
(\{1,2,3\}\Xrightarrow{(\rz,\slb)/\iota/\iota}2),
(\{1,2,3\}\Xrightarrow{(\rz,\lz)/\iota/\iota}3),
(\{1,2,3\}\Xrightarrow{(\srB,\underline0)/\iota/\epsilon}4),
\\
&
(\{1,\dots,5\}\Xrightarrow{\scalebox{.7}{$\begin{array}{c}
(\srb,\slb)/\iota/\iota\\
\end{array}$}}4),
(\{1,\dots,5\}\Xrightarrow{\scalebox{.7}{$\begin{array}{c}
(\srb,\lz)/\iota/\iota
\end{array}$}}4),
(\{4,5\}\Xrightarrow{(\rz,\slb)/\iota/\iota}4),
(\{4,5\}\Xrightarrow{(\rz,\lz)/\iota/\iota}5)
\}.
\end{alignedat}$}
\end{gather}

Standalone brackets encode the endpoints of standalone PRC edges, whereas leaning
brackets encode incidences of leaning edges.  The transition relation
admits exactly the local bracket patterns satisfying \textsc{WriteTop},
\textsc{Earliest}, and \textsc{NoTail}.

\begin{theorem}\label{thm:graph}
The automaton $\mathcal S$ recognizes exactly the canonical encodings of all ordered graphs, and no other strings.
\end{theorem}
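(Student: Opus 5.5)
The plan is to show that $L_{\mathcal S}$ equals the set of canonical encodings. I will do this by establishing a bijection, vertex by vertex, between accepting layer computations of $\mathcal S$ and pairs consisting of an ordered graph and its proper rope cover (PRC) partition, with each leaning edge attached as prescribed by \textsc{Earliest}. The argument runs through the single-symbol interface. Since $\Gamma=\{\iota\}$, an interface at a factor boundary is just $\iota^{k+1}$: one carrier plus $k$ \emph{active standalone edges}, namely the standalone edges opened at an earlier vertex and not yet closed. I will first fix a decoding map $D$ from accepted words to graphs and then verify three things. Soundness: every accepted word decodes to a graph, and it is the canonical encoding of $D(w)$. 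Completeness: every graph's canonical encoding is accepted. Injectivity of $D$ on accepted words, which follows once canonical encodings are shown to be unique.

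\textbf{Decoding.} I read a folded factor $(\rd,\ld)x_2\cdots x_n$ at vertex $v$ using the transition semantics. Position $j\ge 2$ of the fold accesses the $j$-th interface symbol, i.e.\ the $(j-1)$-th active standalone edge counted from the top. The symbols are interpreted as follows.
\begin{itemize}
\item A pair $(\srB,\underline0)$ closes that edge at $v$.
\item A pair $(0,\slB)$ opens a new edge on top. \textsc{WriteTop} holds because this call exists only from state~$1$.
\item A leaning pair, with component $\srb$ or $\slb$, records a leaning edge from $v$ to the left or right endpoint of the accessed edge. Here $\rz$ and $\lz$ denote the absence of such an incidence.
\end{itemize}
The bracket balance in $\Delta'$ (Theorem~\ref{thm:2vpl} and the definition of $L_{\mathrm{max}}$) ensures that every standalone edge closes exactly once. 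The layer condition $\sigma,\sigma'\in\{\iota\}$ forces the interface to be empty of edges at both ends. Together these give a well-defined ordered graph $D(w)$ with a designated set $C$ of standalone edges, and each remaining edge leans on an element of $C$.

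\textbf{Soundness and completeness.} For soundness, I show that $C$ is a PRC of $D(w)$ and that the chosen leaning attachments are the \textsc{Earliest} ones. No two edges of $C$ can lean on each other, for the following reason. Distinct standalone edges sharing an endpoint would have to be opened twice at one vertex, which \textsc{WriteTop} forbids, or closed at a vertex in a way that would have to be re-expressed as a leaning incidence. I will argue this by case analysis of the state graph, which has states $1$--$5$. For completeness, I take a graph, compute its PRC (unique by \cite{YliJyraFSMNLP2019,EzquerroEtAlACL2025}), and read off the canonical factor at each vertex. This factor consists of the returns in stack order, then the leaning incidences, then an optional final call. I then exhibit the run of $S$: states $1$--$3$ handle the \textsc{Earliest} prefix up to the last $(\srB,\underline0)$, state~$4$/$5$ handles the remaining leaning accesses, and acceptance in $F=\{1,2,4\}$ follows because \textsc{NoTail} forbids a trailing $(\rz,\lz)$.

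\textbf{Main obstacle.} The hard step is matching \textsc{Earliest} and \textsc{NoTail} exactly with the state discipline. Concretely, I must show that a leaning edge with two possible supporting standalone edges is encoded through the shared right endpoint precisely when the transitions from $\{1,2,3\}$ are used before the last return, and that no other order is accepted. I would attack this with an invariant for each state, stating which interface positions have been accessed and whether a return is still possible. I would then check each transition against the symmetry-breaking conventions of \cite{YliJyraFSMNLP2019}, taking the bijectivity established there as given.
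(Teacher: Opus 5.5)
Your overall route is the paper's: uniqueness of the PRC, the three conventions of \cite{YliJyraFSMNLP2019} as the source of uniqueness, and a check that $\mathcal S$ implements them without excluding any graph. However, your decoding and soundness steps rest on a reading of $\mathcal S$ that is wrong in two places, and as written those steps fail.

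\emph{Indexing.} The call $(0,\slB)/\epsilon/\iota$ consumes no interface symbol. So when fold position~$2$ is a call, position $j\ge 3$ accesses the $(j-2)$-th active edge, not the $(j-1)$-th. With your rule, vertex~$3$ of the running example decodes to the leaning edge $(1,3)$ instead of $(2,3)$, and vertex~$5$ closes $(2,7)$ instead of $(3,5)$. Index accesses by the number of symbols consumed so far.

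\emph{Returns.} You picture factors with several returns (``the returns in stack order'', ``up to the last $(\srB,\underline0)$''). In fact the only return transition leads from $\{1,2,3\}$ to $4$, and nothing leads from $\{4,5\}$ back, so a factor has at most one return. It also has at most one call, since the call exists only from state~$1$, which only the initial transition reaches. This is the actual reason the decoded $C$ is proper. Two standalone edges lean on each other only if they share a left endpoint or a right endpoint, i.e.\ only if one vertex carries two calls or two returns. Edges meeting from opposite sides, like $(3,5)$ and $(5,6)$, do not lean. Your clause about closings being ``re-expressed as a leaning incidence'' does not establish this. Moreover, a factor with two returns would in fact violate properness.

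With these corrections, the step you call the main obstacle can be checked directly from the transition table rather than imported. By \textsc{WriteTop}, the stack order of active edges is their opening order. So at a vertex $w$ whose return closes $(u,w)$, an active edge $(v,x)$ lies above $(u,w)$ iff $u<v$. That is exactly when $(v,w)$ could lean on $(u,w)$ instead of $(v,x)$. Every $(\srb,\cdot)$ transition enters state~$4$, where no return is possible. Hence $\mathcal S$ forbids $\srb$-accesses above the return and permits them below it. This is precisely \textsc{Earliest}, and it also prevents one edge from being encoded twice.

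For completeness, the canonical fold at $v$ is not ``returns, then leaning incidences, then a call''. It is $(\rd,\ld)$, then the optional call, then one pair per active edge in stack order. The return, if any, sits at its own edge's position, only $(\rz,\slb)$ or $(\rz,\lz)$ pairs occur above it, and trailing $(\rz,\lz)$ pairs are dropped. This run exists by the argument above and ends in $F=\{1,2,4\}$ by \textsc{NoTail}. With these repairs your soundness/completeness/injectivity structure goes through.
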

\begin{proof}[Proof sketch]
Every ordered graph has a unique PRC, from which a number of hierarchical bracketing encodings are obtained in $\nabla^* \cap \Delta'$.  
The three conventions restrict the set of possible encodings to unique ones \cite{YliJyraFSMNLP2019} and 
are implemented by the RIA $\mathcal S$.   Conversely, 
the \textsc{Earliest} convention merely requires all available
right-leaning edges to be encoded at the shared right endpoint,
\textsc{NoTail} eliminates useless brackets, and \textsc{WriteTop}
essentially allocates an arbitrary stack position for every 
standalone edge that starts.  
The conventions are mutually compatible and retain a unique canonical encoding of every ordered graph.
\end{proof}

By Theorem~\ref{thm:graph}, the set of canonical encodings for ordered graphs can now be named as $L_{\mathrm{graph}}$.  
Any RIA-recognizable sublanguage of $L_{\mathrm{graph}}$ is called a \emph{graph-encoding RIL}, or \emph{GRIL}.  

\begin{corollary}
GRILs are closed under union, intersection, concatenation, and Kleene-$\ast$.
Deterministic GRILs are closed under complement.
\end{corollary}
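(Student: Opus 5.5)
The plan is to treat GRILs as the RILs contained in $L_{\mathrm{graph}}$ and to derive each closure property from Theorem~\ref{thm:reg}, Theorem~\ref{thm:boo}, and the fact that $L_{\mathrm{graph}}=L_{\mathcal S}$ is recognized by the deterministic layer $\mathcal S$ (Theorem~\ref{thm:graph}). Each operation then has two obligations: the result must be an RIL, and it must stay inside $L_{\mathrm{graph}}$.

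For \textbf{union and intersection} I will appeal directly to Theorem~\ref{thm:reg}. The result is an RIL, and since $L_1\cup L_2$ and $L_1\cap L_2$ are subsets of $L_{\mathrm{graph}}$ whenever $L_1,L_2$ are, they are GRILs.

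For \textbf{concatenation and Kleene-$\ast$}, Theorem~\ref{thm:reg} again gives that $L_1L_2$ and $L_1^\ast$ are RILs, and the work lies in showing they remain inside $L_{\mathrm{graph}}$. I will argue that $L_{\mathrm{graph}}$ is closed under concatenation and contains $\epsilon$ (the empty graph, $\sigma=\sigma'=\iota$ via $\semo S^0$).

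\begin{itemize}
\item A word $w\in L_{\mathrm{graph}}$ is an accepting run of $\mathcal S$ from $\iota$ to $\iota$ through $\semo S^{\circledast}$, and $\mathcal S$ has a single carrier $\iota$ as both initial and final carrier.
\item Hence for $w,w'\in L_{\mathrm{graph}}$ the runs compose: $\semo S^m\circ\semo S^n\subseteq\semo S^{m+n}$ yields $(ww',\iota,\iota)\in\semo S^{\circledast}$, so $ww'\in L_{\mathcal S}$.
\item Graph-theoretically this is the disjoint juxtaposition of two ordered graphs. The interface returns to the singleton $\iota$ at the junction, so no standalone edge crosses it, and the three conventions are vertex-local and therefore preserved.
\item Induction then gives $L_1^\ast\subseteq L_{\mathrm{graph}}^\ast=L_{\mathrm{graph}}$.
\end{itemize}

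The main obstacle is this step, because it depends on the canonicity claim of Theorem~\ref{thm:graph}. I must check that the concatenation of two canonical encodings is the canonical encoding of the juxtaposed graph, i.e.\ that the PRC of a disjoint juxtaposition is the union of the PRCs. I expect this to hold because no edge can lean across a vertex cut with no crossing edges. In any case, membership in $L_{\mathcal S}$ is all the definition of GRIL requires.

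For \textbf{complement of deterministic GRILs}, I interpret the complement relative to $L_{\mathrm{graph}}$; absolute complement would leave $L_{\mathrm{graph}}$.

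\begin{itemize}
\item Let $L=L_{\mathcal B}$ with $\mathcal B$ deterministic.
\item Since $\mathcal S$ is deterministic (singleton $G$, and $S$ functional because it only ever manipulates $\iota$ by the listed pushes, pops and replaces), Theorem~\ref{thm:boo} gives that $L_{\mathcal S}\setminus L_{\mathcal B}$ is a DRIL. Concretely, it is recognized by the product layer $(S\times B,\{\iota\}\times G_B,\{\iota\}\times(\Gamma_B\setminus H_B))$ after completing $\mathcal B$ by Lemma~\ref{lem:completion} and normalising carriers by Lemma~\ref{lem:carr}.
\item This language is contained in $L_{\mathrm{graph}}$, so it is a deterministic GRIL, which is the required complement $L_{\mathrm{graph}}\setminus L$.
\end{itemize}
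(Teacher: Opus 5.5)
Your proposal is correct and follows the paper's proof. Union, intersection, and relative complement are inherited from the RIL and DRIL closure results via $L_{\mathrm{graph}}=L_{\mathcal S}$ with $\mathcal S$ deterministic, and concatenation and Kleene-$\ast$ stay inside $L_{\mathrm{graph}}$. For that last step, your automaton-level justification through the single carrier $\iota$ (so that $\epsilon\in L_{\mathcal S}$ and $L_{\mathcal S}L_{\mathcal S}\subseteq L_{\mathcal S}$ by composing $\semo S^{\circledast}$-runs) makes the paper's semantic ``disjoint juxtaposition'' remark explicit and sidesteps the canonicity question you flagged.
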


\begin{proof}[Proof sketch]
Every GRIL has the form $L_{\mathrm{graph}}\cap L$, where $L$ is an RIL. 
Since $L_{\mathrm{graph}}=L_{\mathcal S}$ and $\mathcal S$ is deterministic, 
closure under union and intersection, as well as closure of deterministic
GRILs under complement relative to $L_{\mathrm{graph}}$, follow
from the corresponding closure properties of RILs and deterministic RILs. 
Since concatenation and Kleene-$\ast$ operate on complete code strings, each of which is 
a concatenation of vertex-local factors, these operations
juxtapose the component graphs disjointly without joining them.  This is an acceptable semantics for
the concatenation and Kleene-$\ast$ operations of graph-encoding languages.
\end{proof}


\section{Decision Problems and Family Inclusions}

We have already provided some inclusion results for language families.
The families of regular, context-free, context-sensitive, and recursively enumerable languages are denoted by $\mathsf{REG}$, $\mathsf{CFL}$, $\mathsf{CSL}$, and $\mathsf{RE}$, respectively.
The class $\mathsf{NSPACE}(n)$ consists of the decision problems solvable by
a nondeterministic Turing machine using $O(n)$ space.

\begin{corollary}[From Theorem \ref{thm:vpl}]\label{thm:uadec}
    Emptiness, containment, and equivalence are decidable for $\mathsf{UARIL}$.
\end{corollary}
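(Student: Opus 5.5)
The plan is to reduce every question about a unit-acyclic layer to the corresponding question about visibly pushdown automata, where all three problems are classically decidable. The key ingredient is Theorem~\ref{thm:vpl}. Its proof is not just an existence statement: it gives a construction. Given a unit-acyclic layer $\mathcal A=(A,G,H)$, the construction produces a VPA over the alphabet $\Sigma\cup\Omega$ with the visible partition $\tilde S$ used in Theorem~\ref{thm:vpl-intersection}. The first stack checks global balance. The finite control simulates the local computation, storing the bounded output prefix $\beta$ and checking the center-foldable factorization. I will first argue that this construction is effective. The unit is finite and acyclic, so every accepting path has length at most $|Q|$. Hence the set of local interface transformations $\alpha\rho\mapsto\beta\rho$ is a finite, explicitly enumerable set, and it can be computed by enumerating the paths of the unit. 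Likewise, the finite control tracking a factor can be bounded by $|Q|$. The $\nabla$-factorization can then be checked with bounded memory, because each factor has length at most $2|Q|$.

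Second, I will derive the three decision procedures. Emptiness of $L_{\mathcal A}$ reduces to emptiness of the constructed VPA, which is decidable by the usual pushdown reachability, or summary saturation, argument. For containment $L_{\mathcal A}\subseteq L_{\mathcal B}$, I would avoid relying on determinism of RIAs. Instead, I convert both layers to VPAs $M_{\mathcal A}$ and $M_{\mathcal B}$ over the same visible partition. I then determinize and complement $M_{\mathcal B}$ (VPLs are closed under complement by Alur--Madhusudan), intersect with $M_{\mathcal A}$, and test emptiness. Equivalence follows as containment in both directions. As an alternative route, I could use Theorem~\ref{thm:uaclos}: the relative complement $L_{\mathcal A}\setminus L_{\mathcal B}$ is again a UARIL, constructed effectively, so containment reduces directly to UARIL emptiness. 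I would mention this route but rely on the VPA one.

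The main obstacle is the effectiveness of, and agreement of visible partitions in, the translation of Theorem~\ref{thm:vpl}. Both VPAs must use the same call/return/internal partition, since VPL closure under complement and intersection requires a fixed partition. This must hold even though the local $\nabla$-structure uses a different partition from the global $\Delta'$-structure. I would resolve this by using the global partition $\tilde S$ for both machines. The local center-foldable structure is then handled entirely in the finite control, which is possible precisely because acyclicity bounds factor length by $2|Q|$. This is the step where unit-acyclicity is essential; for general RILs the argument fails, consistent with the undecidability of emptiness.
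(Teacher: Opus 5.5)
Your proposal is correct and follows essentially the paper's route: the corollary comes from the effective translation in Theorem~\ref{thm:vpl} of a unit-acyclic layer into a VPA over the global visible partition $\tilde S$, combined with the classical Alur--Madhusudan decision procedures for emptiness, containment (via determinization, complement, intersection and an emptiness test) and equivalence. Your explicit arguments for effectiveness, via bounded path and factor lengths, and for a common visible partition make precise what the paper leaves implicit.
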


\begin{theorem}
    $\mathsf{REG} \not\subseteq \mathsf{RIL}$ and $\mathsf{VPL}\not\subseteq\mathsf{RIL}$. \label{thm:nonreg}
\end{theorem}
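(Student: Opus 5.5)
The plan is to exploit the structural constraint built into every RIL: by definition of $\semo A$, every accepted word is a concatenation of factors from $\nabla$. Hence every $L_{\mathcal A}\subseteq\nabla^*\cap\Delta'$. In particular, every nonempty word of an RIL begins with a symbol of $\Sigma_{\local}$ and ends with a symbol of $\Omega_{\local}$, and has even length factor by factor. It therefore suffices to exhibit a single regular language that violates this invariant. Since $\mathsf{REG}\subseteq\mathsf{VPL}$ over any partitioned alphabet, that same witness also settles the second claim $\mathsf{VPL}\not\subseteq\mathsf{RIL}$.

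The steps, in order, are as follows.

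First, I verify the invariant $L_{\mathcal A}\subseteq\nabla^*$ from the definitions. Composition of relations concatenates the words. The base relation $\semo A$ has first components in $\nabla$, and $R^0$ contributes only $\epsilon$. An induction on $n$ then gives $\semo A^{n}\subseteq\nabla^n\times\Gamma^*\times\Gamma^*$.

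Second, I fix any partitioned alphabet $\Pi$ and take a local symbol $a\in\Sigma_{\local}$, for instance $\rd$. The witness is the singleton $\{a\}$, which is regular and, being finite, is also a VPL. It is not contained in $\nabla^*$, because every word of $\nabla$ has length at least $2$ and ends in $\Omega_{\local}$. So no layer can recognize it, and $\{a\}\notin\mathsf{RIL}$.

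If a less trivial witness is wanted, one that lies inside the "intended" universe, I would instead use a regular subset of $\Delta'$ that is not in $\nabla^*$. An example is $\{\rd\,\rd\,\ld\}$, which is balanced in $\Delta'$ since it contains no brackets, but has odd length and so cannot factor over $\nabla$.

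The main obstacle is interpretive rather than technical. One must make sure the statement is meant over a fixed partitioned alphabet, with RILs being languages over $\Sigma\cup\Omega$, and not up to some encoding, since otherwise the argument would trivialize in a different way. If the intended claim is the stronger one, that even $\mathsf{REG}\cap\nabla^*\not\subseteq\mathsf{RIL}$, I would fall back on the restriction that only four symbol-pair classes occur in transitions. A factor such as $\rd\,\slB\,\ld\,\ld$ folds to $(\rd,\ld)(\slB,\ld)$, and the pair $(\Sigma_{\stand}\cup\Sigma_{\lean})\times\Omega_{\local}$ with a leaning left component, e.g. $(\srb,\ld)$, lies in none of the admissible classes. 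Hence no unit can read it. The singleton language of such a factor is regular and in $\nabla^*$, yet it is not an RIL.
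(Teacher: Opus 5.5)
Your main argument is correct and is essentially the paper's proof: every RIL lies in $\nabla^*$, so a one-symbol language $\{a\}$, which is regular and hence visibly pushdown, is not an RIL, and $\mathsf{REG}\subseteq\mathsf{VPL}$ gives the second claim. One small slip occurs in your optional strengthening: $\slB\in\Omega_{\stand}$ is an opening bracket, so $\rd\,\slB\,\ld\,\ld\notin\nabla$ and $(\slB,\ld)$ is not a pair symbol; the witness you intend is $\rd\,\srb\,\ld\,\ld$, which folds to $(\rd,\ld)(\srb,\ld)$ with the non-admissible pair $(\srb,\ld)\in\Sigma_{\lean}\times\Omega_{\local}$.
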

\begin{proof}[Proof sketch]
    Every RIL is contained in $\nabla^*$.
    Let $a$ be a symbol.   Then $\{a\}$ belongs to $\mathsf{REG} \backslash \mathsf{RIL}$.     
    Since $\mathsf{REG} \subseteq \mathsf{VPL}$, it follows that $\mathsf{VPL} \not \subseteq \mathsf{RIL}$.
\end{proof}

\begin{theorem}
    $\mathsf{RIL} \not\subseteq \mathsf{CFL}$. \label{thm:noncfl}
\end{theorem}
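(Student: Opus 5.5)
The plan is to exhibit one explicit RIA layer $\mathcal N$ and show that $L_{\mathcal N}$ is not context-free, using the closure of $\mathsf{CFL}$ under intersection with regular languages and under homomorphisms. The layer will use the interface as a counter. A first block of vertices pushes $n$ standalone symbols. A single middle vertex then rewrites all $n$ of them through $n$ replace pairs. A final block pops them again. Since one folded factor may access an unbounded prefix of the interface, the middle factor can contain $n$ leaning brackets on each side. This is the feature that a single context-free stack cannot simulate.

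\textbf{The unit.} Take stack alphabet $\{g_0,g_1,g_2,A_0,A,B_0,B\}$, with $G=\{g_0\}$ and $H=\{g_2\}$. Fix one standalone opening bracket $\slB$, one closing bracket $\srB$, leaning brackets $\srb\in\Sigma_\lean$ and $\slb\in\Omega_\lean$, and the local symbols $\rd,\ld,0$. The unit has the following components.
\begin{itemize}
\item \emph{Push factors} $p=\rd 0\slB\ld$, folding to $(\rd,\ld)(0,\slB)$. On carrier $g_0$ the initial transition keeps $g_0$ and the call pushes $A_0$. On carrier $g_0$ with $A$ or $A_0$ below, the call pushes $A$. (Using Lemma~\ref{lem:carr}-style distinct carrier copies, or simply a second carrier $g_0'$ after the first push, the unit can tell whether it is the first push.)
\item \emph{The middle factor} $m_n=\rd\,\srb^n\slb^n\ld$, folding to $(\rd,\ld)(\srb,\slb)^n$. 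Its initial transition changes $g_0'$ to $g_1$. Its replace transitions rewrite $A\mapsto B$ and stay in a nonfinal state, while rewriting $A_0\mapsto B_0$ moves to a final state with no outgoing transitions. Hence an accepting local computation rewrites exactly the whole pushed block, down to and including $A_0$.
\item \emph{Pop factors} $r=\rd\srB 0\ld$, folding to $(\rd,\ld)(\srB,0)$. On carrier $g_1$ they pop $B$ and keep $g_1$. Popping $B_0$ changes the carrier to $g_2$.
\end{itemize}
I then check that $(\rd,\ld)$ is used only in initial transitions, that every pair lies in an admissible class, and that every word of $L_{\mathcal N}$ lies in $\nabla^*\cap\Delta'$. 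For the latter: $n$ openings from the push factors, the middle factor closes $n$ and reopens $n$, and the pops close $n$.

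\textbf{Computing the language.} Let $R=p^+\,\rd\srb^+\slb^+\ld\,r^+$, which is regular. I will argue that
\[
L_{\mathcal N}\cap R=\{p^n m_n r^n\mid n\ge1\}.
\]
The unit structure forces a single middle factor. The final carrier $g_2$ must sit on a singleton stack, and it is produced only by popping $B_0$, which lies at the bottom. Therefore the number of pops equals the stack height, which equals $n$. The middle factor has to reach $A_0$, so it rewrites exactly $n$ symbols.

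Now suppose $L_{\mathcal N}$ were context-free. Applying the homomorphism $p\mapsto a$, $\srb\mapsto b$, $r\mapsto c$, and erasing every other symbol outside these blocks, to $L_{\mathcal N}\cap R$ would give $\{a^nb^nc^n\mid n\ge1\}$ as a context-free language. By the pumping lemma it is not, so $L_{\mathcal N}\notin\mathsf{CFL}$.

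\textbf{Main obstacle.} The delicate step is the exact equality for $L_{\mathcal N}\cap R$, in particular excluding computations where the middle factor accesses only a proper prefix and leaves a residual suffix $\rho$. I plan to handle this through the bottom marker: reaching the final state of the middle factor requires consuming $A_0$. Completeness is not needed, because missing transitions simply reject.
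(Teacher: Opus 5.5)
Your proof is correct. It has the same skeleton as the paper's: exhibit an RIL, map it onto $\{a^nb^nc^n\}$ by a homomorphism and a regular filter, and use closure of $\mathsf{CFL}$. The witness, however, is genuinely different.

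The paper uses three vertex types $A$, $B$, $C$ over two standalone bracket types. Each $B$-factor pushes a type-2 symbol on top and pops a type-1 symbol. Its $k$ replace pairs serve only to reach beneath the type-2 symbols pushed by earlier $B$-factors. The three counts are the \emph{numbers} of $A$-, $B$- and $C$-factors, and the $k$'s are erased by $h$. The filter $a^*\underline b^*\underline c^*$ is needed because the typed matching alone does not fix the order of the factors.

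You instead put the middle count \emph{inside a single factor}. The bottom marker $A_0$ forces the run of replace pairs to span the whole interface, so the length of that one factor equals the stack height. This buys three things:
\begin{itemize}
\item only one bracket type is needed;
\item your carrier discipline $g_0\to g_0'\to g_1\to g_2$ already enforces the factor order, so with final states having no outgoing transitions, $L_{\mathcal N}=\{p^nm_nr^n\mid n\ge 1\}$ outright and $R$ is essentially redundant;
\item you specify the unit explicitly, where the paper only asserts that one is straightforward to build.
\end{itemize}
The paper's witness, in turn, is closer to the graph-encoding motivation. It exhibits crossing dependencies between two edge families spread over many short vertices.

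A few details need repair, none of which affects the argument.
\begin{itemize}
\item The symbol $0$ cannot lie in both $\Sigma_\local$ and $\Omega_\local$. The pop factor should be $\rd\srB\underline{0}\ld$ with $\underline{0}\in\Omega_\local$, as in the paper.
\item The carrier $g_0'$ must be added to $\Gamma$, and the ``with $A$ or $A_0$ below'' description dropped. A call consumes $\epsilon$ and cannot see what lies below the carrier.
\item In a pop factor the new carrier is emitted by the initial transition, before the return reads $B$ or $B_0$. The choice between $g_1$ and $g_2$ must therefore be guessed nondeterministically and then checked by the return.
\item The homomorphism must be defined on letters rather than on the factors $p$ and $r$. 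For example, take $\slB\mapsto a$, $\srb\mapsto b$, $\srB\mapsto c$, and every other symbol $\mapsto\epsilon$.
\end{itemize}
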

\begin{proof}[Proof]
    Pick the partitioned alphabet $\Pi=(\{\rB_1,\rB_2\},\{\lB_1,\lB_2\},\{\rb,\rz\},\{\lb,\lz\},\{\rd,a,0\},\{\ld,\underline b,\underline c\})$ and define the factor languages
    $
    A=\{ \rd \;a\!\lB\!_1\; \ld\}, B=\{ \rd 0\; \rb^k \; \rB\!\!_1\underline b \; \lb^k\; \lB\!\,_2\ld \; \mid \; k\in\mathbb N\}$,\textrm{ and } $C=\{\rd\;\rB\!\!_2\underline c\;\ld\}.$
    
    It is straightforward to construct a RIA layer $\mathcal A$ whose unit recognizes factors from $A \cup B \cup C$ and uses its stack to match every $\lB_1$ with a later $\rB\!\!_1$ and every $\lB_2$ with a later $\rB\!\!_2$. 
    Let $h:(\Omega\cup\Sigma)^*\to\{a,\underline b,\underline c\}^*$ be the erasing homomorphism that fixes $a$, $\underline b$, and $\underline c$ and
    maps every other symbol to $\epsilon$, and let $R=a^*\underline b^*\underline c^*$.    
    The intersection with $R$ restricts the factors to the order $A^*B^*C^*$. Matching the brackets labelled $1$ forces the numbers of
    $A$- and $B$-factors to be equal, while matching those labelled $2$ forces the numbers of $B$- and $C$-factors to be equal, as shown in Example~\ref{eq:abc}.
    Hence
    $h(L(\mathcal A))\cap R = \{\;a^n\underline b^n\underline c^n\mid n\geq0\;\},$ which is not a context-free language.
    \begin{gather}\renewcommand\phantom[1]{{\color{black!27}#1}}
    \scalebox{.75}{$
    \phantom{\rd\;}a\phantom{\!}\lB\!_1\phantom{\;\ld  \rd\;}a\phantom{\!}\lB\!_1\phantom{\;\ld  \rd\;}a\phantom{\!}\lB\!_1\phantom{\;\ld  
    \rd0\;}\rb\rb\;\rB\!\!_1\underline{b}\phantom{\;}\lb\lb\;\lB\!\,_2\phantom{\ld  
    \rd0\;}\rb\rb\;\rB\!\!_1\underline{b}\phantom{\;}\lb\lb\;\lB\!\,_2\phantom{\ld  
    \rd0\;}\rb\rb\;\rB\!\!_1\underline{b}\phantom{\;}\lb\lb\;\lB\!\,_2\phantom{\ld  
    \rd\;}\rB\!\!_2\underline{c}\phantom{\;\ld\rd\;}\rB\!\!_2\underline{c}\phantom{\;\ld \rd\;}\rB\!\!_2\underline{c}\phantom{\;\ld}
    $}\label{eq:abc}
    \end{gather}
    
    Assume that $L(\mathcal A)$ is context-free. Context-free languages are closed under homomorphisms and intersection with regular languages, so
    $h(L(\mathcal A))\cap R$ would be context-free. This contradicts the fact that $\{a^n\underline b^n\underline c^n\mid n\geq0\}$ is not context-free. 
\end{proof}

\begin{theorem}\label{thm:nspace}
    $\mathsf{RIL}\subsetneq\mathsf{CSL}$.
\end{theorem}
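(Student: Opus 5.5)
The statement has two parts: the inclusion $\mathsf{RIL}\subseteq\mathsf{CSL}$ and its strictness. I will prove the inclusion by giving a nondeterministic linear-space recognizer, using $\mathsf{CSL}=\mathsf{NSPACE}(n)$. Strictness will follow from Theorem~\ref{thm:nonreg}.

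\textbf{Inclusion.} Fix an RIA layer $\mathcal A=(A,G,H)$ and an input $w$.

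\begin{enumerate}
\item The machine checks $w\in\nabla^*$ deterministically and computes the unique factorization $w=\tau_1\cdots\tau_m$. Since each factor starts with a symbol of $\Sigma_{\local}$ and $\nabla$ is prefix-free, the factorization is obtained by a left-to-right scan using counters of logarithmic size.
\item It guesses an initial carrier $\sigma_0\in G$.
\item For $i=1,\dots,m$ it simulates one local computation of $A$ on $\fold(\tau_i)$. The pair symbols $(x_j,y_j)$ are read directly from $w$ with two pointers moving inward from both ends of $\tau_i$, so the folded string is never stored. The current state, the unprocessed part of the input interface, and the produced output prefix $\gamma_1'\cdots\gamma_j'$ are kept on a work tape. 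The residual suffix $\rho$ is left in place.
\item At the end of each factor it checks $q_n\in F$. At the end of the input it checks that the stack is a single symbol in $H$.
\end{enumerate}

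By definition of $\semo A^{\circledast}$, the accepting runs of this machine correspond exactly to witnesses of $w\in L_{\mathcal A}$.

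\textbf{Space bound.} This is the main point to check. Every transition consumes and emits at most one stack symbol, and only calls, i.e.\ $\delta_1$ pairs from $\Sigma_{\local}\times\Omega_{\stand}$, increase the interface length. Each call consumes an input pair containing a symbol of $\Omega_{\stand}$. Hence the interface length is at most $1+|w|$ at all times. A slightly subtle point is that, within a factor, the output prefix is produced while the accessed prefix is consumed. Both fit in the same linear region: I would store the interface as a string and overwrite the accessed prefix, shifting by at most one cell per call or return. Alternatively, the output prefix can be kept in a separate buffer of length at most $|\tau_i|$. Either way the total space is $O(|w|)$. Therefore $L_{\mathcal A}\in\mathsf{NSPACE}(n)=\mathsf{CSL}$.

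A simpler alternative is to invoke Theorem~\ref{thm:2vpl} and use the known inclusion of 2-stack visibly pushdown languages in $\mathsf{CSL}$. I would mention this only as a remark, because the direct simulation is self-contained.

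\textbf{Strictness.} By Theorem~\ref{thm:nonreg} there is a regular language, for instance $\{a\}$, that is not an RIL. Every regular language is context-sensitive, so $\mathsf{CSL}\not\subseteq\mathsf{RIL}$ and the inclusion is proper.
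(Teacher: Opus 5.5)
Your proposal is correct and follows the paper's argument. The paper also simulates the unit factor by factor in nondeterministic linear space, bounds the stack height linearly because each transition changes it by at most one symbol, applies $\mathsf{NSPACE}(n)=\mathsf{CSL}$, and obtains strictness from $\{a\}\in\mathsf{REG}\setminus\mathsf{RIL}$ (Theorem~\ref{thm:nonreg}). Your deterministic factorization scan, your two-pointer reading of folded pairs, and your tighter count using only calls are minor refinements of the paper's ``guess and verify'' step and its $O(n)$ estimate.
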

\begin{proof}[Proof sketch]
Consider the membership problem.
Fix a RIA layer $\mathcal A$, and let $w$ be an unfolded input word of
length $n$. A nondeterministic algorithm guesses and verifies a
factorization of $w$ into center-foldable factors using $O(n)$ space.
It then simulates the successive applications of the RIA unit, storing
only the current factor, the current stack, and a constant amount of
finite-state control.
The current factor has length at most $n$. Since the unit is fixed and every transition changes the stack by at most one symbol, the stack height is $O(n)$.
Consequently, throughout the computation the stack has height $O(n)$. 
Thus each application of the RIA unit, and hence the entire membership 
computation, can be simulated in nondeterministic space $O(n)$. Therefore
$L(\mathcal A)\in\mathsf{NSPACE}(n)$.  Since $\mathsf{NSPACE}(n)=\mathsf{CSL}$, 
it follows that $\mathsf{RIL}\subseteq\mathsf{CSL}$. 
\nocite{KURODA1964207}
However, since $\mathsf{REG} \not\subseteq \mathsf{RIL}$, $\mathsf{RIL}\subsetneq\mathsf{CSL}$.
\end{proof}

\begin{theorem}\label{thm:emp}
The emptiness problem is undecidable for $\mathsf{DRIL}$.
\end{theorem}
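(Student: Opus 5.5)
Undecidability will come from a reduction from the Post Correspondence Problem (PCP), or equivalently from the emptiness of intersections of two deterministic context-free / visibly pushdown languages with distinct visible partitions. That problem is known to be undecidable for two-stack visibly pushdown automata \cite{Carotenuto}. The plan is to turn a PCP instance $(u_1,v_1),\dots,(u_k,v_k)$ over $\{a,b\}$ into a deterministic RIA layer $\mathcal A$ such that $L_{\mathcal A}\neq\emptyset$ iff the instance has a solution.

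The encoding mirrors the proof of Theorem~\ref{thm:noncfl}: a middle-vertex factor exchanges one stack for another, as the $B$-factors there did with brackets $1$ and $2$.

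\textbf{Phase one.} One sequence of factors pushes the index sequence $i_1\cdots i_m$ one symbol per factor. Each factor is $\rd\,\ld$ with a call pair whose stack symbol records $i_j$ together with the letters of $u_{i_j}$, pushed letter by letter. A single factor may push a bounded word $u_i$ via several call pairs, since calls can occur repeatedly in a unit.

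\textbf{Phase two.} Popping proceeds letter by letter using return pairs, while replace pairs (leaning brackets) convert the accessed top symbols. Because replace transitions rewrite interface symbols in place, a single factor can read a prefix of $\ell$ letters and rewrite it.

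The cleaner route is to keep two tracks in the stack alphabet $\Gamma=\Gamma_1\times\Gamma_2$ only where the unit writes them synchronously. The unit of a factor for index $i$ then pushes, bottom-up, the stack symbols encoding the letters of $u_i$ and of $v_i$ in a combined way. A final checking factor walks the entire stack with replace pairs and compares the two tracks. Since a unit may access an unbounded prefix of the interface in one factor, a single final vertex whose folded factor has length proportional to the stack can read the whole stack and verify $u$-track $=$ $v$-track.

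The delicate point is that $u_{i_1}\cdots u_{i_m}$ and $v_{i_1}\cdots v_{i_m}$ have different lengths per block. I would therefore let each index factor push $|u_i|+|v_i|$ symbols tagged $u$ or $v$. The checking factor then needs to match the $j$-th $u$-symbol against the $j$-th $v$-symbol, which is not a finite-state comparison.

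Instead I would reduce from the undecidable emptiness of intersections of two VPLs over different visible partitions, which is precisely the double structure $\nabla^*\cap\Delta'$. The local center-folding nesting realizes one stack (pairing $x_j$ with $y_j$ inside a factor), and the interface realizes the other. Concretely, I encode the PCP as a single long factor plus a prefix of pushing factors. The pushing factors push $v_{i_1}\cdots v_{i_m}$ (reversed appropriately) onto the interface and record the index sequence via leaning/local symbols. A last factor has left half $x$ popping the interface letter by letter, with returns checking the $v$-letters. Its right half $y$ is paired position-wise by center folding, and the pair symbols enforce that the reversed right half spells $u_{i_m}\cdots u_{i_1}$ with the same indices as recorded. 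The unit needs only finite control to check the indices against the right half, because the indices were stored on the stack alongside the letters.

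Determinism comes for free: each transition's output is determined by the input pair and the consumed symbol, so $\semo A$ is functional and $G$ is a singleton. The main obstacle is aligning two different length schemes (letters of $u$ versus $v$) through one folding. I would resolve it by padding with internal pairs $(\rd$-type local, local$)$ so that each pair symbol carries at most one letter of each side. I would also verify carefully, with Lemma~\ref{lem:carr} and Lemma~\ref{lem:completion}, that the carrier conventions hold.
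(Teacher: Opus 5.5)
Your construction has a genuine gap, located at the step you flag as the main obstacle. In the last factor the unit reads the pairs $(x_j,y_j)$ in one left-to-right pass with finite control, consuming at most one interface symbol per pair. Padding with internal pairs lets you choose only one alignment:
\begin{itemize}
\item \emph{Block-wise:} the stack block of index $i_k$ is read together with the right-half block $u_{i_k}$. The indices can then be compared, but the $j$-th letter of $v_{i_1}\cdots v_{i_m}$ is never compared with the $j$-th letter of $u_{i_1}\cdots u_{i_m}$.
\item \emph{Letter-wise:} the lag between the current $u$-block and the current $v$-block is unbounded, so the finite control cannot remember the pending indices.
\end{itemize}
No cleverer alignment helps. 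For a fixed unit, the set of interfaces $\sigma$ with $(w,\sigma,h)\in\semo{A}$ for some $w\in\nabla$ and some final carrier $h$ is regular. It is the projection onto the consumed-symbol track of the regular set of accepting runs, since every pair string beginning with a local pair is $\fold(w)$ for some $w\in\nabla$.

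Your first phase only emits new symbols below the carrier. Such factors produce a regular set of interfaces, because only the carrier passes information from one factor to the next. Your layer would therefore be nonempty iff two regular sets intersect, which is decidable, so it cannot encode PCP. For the same reason, undecidability for two-stack VPAs, or for intersections of two arbitrary VPLs, does not transfer here. $\nabla^*\cap\Delta'$ is one fixed double structure, and RILs form a restricted subclass.

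What is missing is the use of the recurrent layer as an \emph{iterated} rewriter. You need unboundedly many factors, each applying a deterministic finite-state transformation to an unbounded prefix of the interface via replace, call and return pairs. The paper does exactly this with deterministic two-counter machines:
\begin{itemize}
\item the configuration is kept on the stack;
\item each factor after the first performs one machine step as a functional stack transformation;
\item a halting run is finally reduced deterministically to a single final carrier.
\end{itemize}
Your PCP idea can be repaired along the same lines. Maintain an interface of the form $\iota\,U\#V$. Let the factor for index $i$ insert $u_i$ at the head of $U$ by calls, copy $U\#$ through replace pairs, and insert $v_i$ at the head of $V$. Then let each checking factor delete and compare the first letters of $U$ and of $V$, one letter of each per factor, accepting once the interface has shrunk to a final carrier. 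Functionality then follows from the unit being transition-deterministic, as you note.
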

\begin{proof}[Proof sketch]
The halting problem of deterministic two-counter machines (deterministic Minsky machines) reduces to the emptiness of deterministic RIA layers.
In our simulation of a two-counter machine, the machine configuration is encoded as a string stored on the stack.  
The initial input factor encodes an initial machine configuration.
The remaining factors encode its unique deterministic computation.  Each deterministic computation step of the machine is simulated as a functional stack transformation computed by the RIA unit.
On acceptance, the stack is reduced to a final carrier stack by a deterministic sequence of recurrent transformations.  An accepted input string witnesses non-emptiness.
Since the halting problem is undecidable for Minsky machines, emptiness for the RIA simulations of these machines must also be undecidable.
\end{proof}

\begin{corollary}\label{thm:undecid}
Emptiness, containment, and equivalence are undecidable for
$\mathsf{DRIL}$ and $\mathsf{RIL}$.
\end{corollary}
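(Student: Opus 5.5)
Emptiness is handled by Theorem~\ref{thm:emp} directly. Containment and equivalence will follow by reducing emptiness to them with computable constructions that stay inside the family. Since $\mathsf{DRIL}\subseteq\mathsf{RIL}$, it is enough to prove undecidability for deterministic layers, because an undecidable subproblem makes the general problem undecidable. For $\mathsf{RIL}$, emptiness is therefore undecidable by Theorem~\ref{thm:emp} restricted to deterministic inputs.

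\textbf{Containment.} Fix an effectively constructible deterministic layer $\mathcal E$ with $L_{\mathcal E}=\emptyset$. One choice is a unit with no transitions, or final carrier set $H=\emptyset$ and a single initial carrier. Its relation $\semo E$ is empty, so it is trivially functional, and the closure $\semo E^{\circledast}$ contains only identities $(\epsilon,\sigma,\sigma)$. With $G\cap H=\emptyset$, even $\epsilon$ is rejected. Then for any deterministic layer $\mathcal A$ we have $L_{\mathcal A}\subseteq L_{\mathcal E}$ iff $L_{\mathcal A}=\emptyset$. This map is computable, so a decision procedure for containment would decide emptiness and contradict Theorem~\ref{thm:emp}.

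\textbf{Equivalence.} With the same $\mathcal E$, $L_{\mathcal A}=L_{\mathcal E}$ iff $L_{\mathcal A}=\emptyset$, so equivalence is undecidable as well. Alternatively, equivalence reduces to containment via $L_{\mathcal A}\subseteq L_{\mathcal B}$ iff $L_{\mathcal A}\cap L_{\mathcal B}=L_{\mathcal A}$, using the product construction from Theorem~\ref{thm:reg} or Theorem~\ref{thm:boolean}. However, the direct reduction to the empty layer is simpler.

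The only point needing care is checking that $\mathcal E$ really is a deterministic layer under the paper's semantic definition. This requires $G$ to be a singleton and $\semo E$ to be functional, and both hold vacuously. The reductions also have to be effective, which they are because $\mathcal E$ is fixed.
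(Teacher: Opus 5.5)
Your proof is correct and follows the paper's argument: fix a deterministic layer recognizing $\emptyset$, decide emptiness via $L\subseteq\emptyset$ and $L=\emptyset$, contradict Theorem~\ref{thm:emp}, and transfer to $\mathsf{RIL}$ via $\mathsf{DRIL}\subseteq\mathsf{RIL}$; you also construct the empty layer explicitly and check its vacuous determinism, which the paper only asserts. One slip in your unused aside: the identity $L_{\mathcal A}\subseteq L_{\mathcal B}\iff L_{\mathcal A}\cap L_{\mathcal B}=L_{\mathcal A}$ reduces containment to equivalence, not equivalence to containment as you wrote.
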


\begin{proof}[Proof sketch]
The empty language is recognized by a deterministic RIA layer. Hence, if
containment were decidable for $\mathsf{DRIL}$, emptiness would also be
decidable by testing whether $L\subseteq\emptyset$. Likewise, if
equivalence were decidable, emptiness would be decidable by testing
whether $L=\emptyset$. This contradicts Theorem~\ref{thm:emp}. Since
$\mathsf{DRIL}\subseteq\mathsf{RIL}$, undecidability of containment and
equivalence immediately extends to $\mathsf{RIL}$.
\end{proof}


\begin{table}[t]
\centering
\small
\scalebox{.76}{\begin{tabular}{l@{\;}l@{\;}l@{\;}l@{\;}l@{\;}l@{\;}l@{\;}l@{\;}l@{\;}l@{\;}l@{\;}l@{\;}l@{\;}l@{\;}l@{}l@{}l@{}l@{}l@{}l@{}l@{}l}
\hline
Family                   & Inclusion\ \  &$\cup$ &$\cap$ &$-$    & $L^R$\ \ \ \ \ 
{}                                                             &$\cdot,{}^*$\;&$\cap$REG\;&$\cap$VPL\;&$h$  & $h^{-1}$&Determinizable\;&Emptiness\;& $\subseteq$ & $=$\ \ \ \ \    & References\ \  & This paper \\ \hline
{}\sf REG                   &\sf VPL & yes   & yes   & yes   & yes   & yes          & yes   & yes    & yes & yes     & yes  & yes   & yes        & yes     &\cite{Hopcroft+Ullman/79/Introduction} \\
\sf UARIL& \sf VPL$\cap$DRIL\ \ \ 
{}                                   &\bf yes&\bf yes&\bf yes&\bf yes&\bf yes       &\bf yes&\bf(yes)&     &         & yes  & yes   & yes        & yes     &\cite{AlurMadhusudan2004} & \ref{thm:reg},\ref{thm:vpl},\ref{thm:uaclos},\ref{thm:uadec} \\ 
\sf VPL    &\sf FL,PB/OMVPL     & yes   & yes   & yes   & yes   & yes          & yes   &\bf(yes)&(yes)&(yes)    & yes  & yes   & yes        & yes     &\cite{AlurMadhusudan2004} \\
\sf FL               &\sf  DCFL\ \ \ & yes   & yes   & yes   & yes   & yes\ \ \ \   & yes   &        &     &         & yes  & yes   & yes        & yes     &\cite{CrespiReghizziMandrioli2012,doi:10.1137/140978818} \\ \hline
\sf DCFL                   &\sf  CFL & no    & no    & yes   & no    & no           & yes   &        & no  & yes     & immediate & yes   & no         & yes     &\cite{Hopcroft+Ullman/79/Introduction,Senizergues}\\\hline
\sf CFL                    &\sf  LIL & yes   & no    & no    & yes   & yes          & yes   &        & yes & yes     & no   & yes   & no         & no      &\cite{Hopcroft+Ullman/79/Introduction}\\
\sf LIL                    &\sf  IL  & yes   & no    & no    & yes   & yes          & yes   &        & yes & yes     & no   & yes   & no         & no      &\cite{LIL}\\
\sf IL                     &\sf  CSL & yes   & no    & no    & yes   & yes          & yes   &        & yes & yes     & no   & yes   & no         & no      &\cite{Aho}\\\hline
\sf PB-MVPL\ \         &\sf  $n$MVPL & yes   & yes   & yes   & no    & yes,no       & yes   &(yes)    &     &         & no   & yes   & yes        & yes     &\cite{Torre2013,LaTorreMadhusudanPa-ARobustClassofConte}\\
\sf OMVPL              &\sf  $n$MVPL & yes   & yes   & yes   & no    & no           & yes   &(yes)    &     &         & no   & yes   & yes        & yes     &\cite{Torre2013,CAROTENUTO20161}\\\hline
\sf DRIL            \ \ \ &\sf  RIL &\bf yes&\bf yes&\bf yes&        &              &\bf yes&\bf(yes) &       &           & immediate &\bf no & \bf no     & \bf no  &&\ref{thm:vpl-intersection},\ref{thm:boo},\ref{thm:undecid}\\
\sf RIL               &\sf  $n$MVPL &\bf yes&\bf yes&       &\bf yes& \bf yes       &\bf yes&\bf(yes) &       &           &           &\bf no & \bf no     & \bf no  &&\ref{thm:reg}, \ref{thm:vpl-intersection},\ref{thm:2vpl},\ref{thm:undecid}\\
\sf $n$MVPL                &\sf CSL & yes   & yes   & no    & yes   & yes           & yes   &        &     &         & no   & no    & no         & no      &\cite{Carotenuto,Bollig2008,CAROTENUTO20161}\\
\sf CSL                    &\sf RE  & yes   & yes   & yes   & yes\; & yes           & yes   &        &yes\;& yes\;   & no   & no    & no         & no      &\cite{Hopcroft+Ullman/79/Introduction,KURODA1964207}\\
\hline
\end{tabular}}
\caption{\label{tab:comp}Comparison of closure properties, determinization, decidability, and expressive power of related structured language families; (yes) denotes that the applicable operation has restrictions.}
\end{table}

\section{Conclusion}

RIAs suggest recurrent interface composition as a computational principle that separates local computation from long-range stack memory.
We have shown that a class of two-stack visibly pushdown computations admits a natural decomposition into vertex-local interface transformations connected by recurrent one-stack interfaces. Rather than maintaining multiple stacks throughout the computation, RIAs externalize persistent stack memory into recurrent interfaces while confining each local computation to a finite-state unit. This decomposition extends the locality of visibly pushdown computation and suggests recurrent composition of interface transformations as a general mechanism for computations over ordered objects. In particular, graph-encoding RILs provide an automata-theoretic representation of arbitrary ordered graphs through a bijection between valid encodings and graphs.

Table~\ref{tab:comp} highlights that RILs occupy an intermediate position between visibly pushdown and two-stack visibly pushdown languages.
The membership problem for $\mathsf{RIL}$ is decidable in nondeterministic linear space.
The language family has strong closure properties, including closure under intersection with compatible VPLs.
Deterministic RILs retain one of the hallmark properties of VPLs by remaining closed under all Boolean operations, and unit-acyclic RILs form a subclass of VPLs, inheriting many of their closure and decidability properties.
The emptiness problem for $\mathsf{RIL}$ is undecidable, as it is for two-stack VPLs.

Important research directions include identifying restrictions on interface transformations that recover determinizability and decidability of emptiness.
Further closure properties of deterministic RILs remain to be investigated, and the conditions under which RIAs are determinizable should be better understood.
Finally, extending RIAs with weights would support applications such as constrained decoding \cite{10.1007/978-3-540-93980-1_3,suresh2026dingo} 
and statistical parsing \cite{EzquerroEtAlACL2025}.

\bibliography{bib-brak,bib-nn,bib-span}
\end{document}